\documentclass[
 reprint,
superscriptaddress,
 amsmath,amssymb,
 aps,
pra,
]{revtex4-2}

\usepackage{lineno}
\usepackage{braket}
\usepackage{graphicx}
\usepackage{dcolumn}
\usepackage{bm}
\usepackage{hyperref}
\usepackage{soul}
\usepackage{braket}
\usepackage{bbding}
\usepackage{multirow} 
\usepackage{mathrsfs}
\usepackage{cuted} 
\usepackage{tikz} 
\usepackage{xcolor}

\usetikzlibrary{matrix,fit}

\soulregister{\cite}{7}
\soulregister{\ref}{7}
\usepackage{amsmath}
\usepackage{amsthm}
\newtheorem{theorem}{Theorem}[section] 
 
\newtheorem{lemma}{Lemma} 

\newtheorem{proposition}[theorem]{Proposition}
\usepackage{balance}

\hypersetup{colorlinks=true, citecolor=blue, urlcolor=blue, linkcolor=blue}

\begin{document}

\preprint{APS/123-QED}
\title{Discrete-Modulated Continuous-Variable Quantum Key Distribution \\ with Uncertainty Principle}

\author{Jiale Mi}
\thanks{These authors contributed equally to this work.}
 \affiliation{%
 Beijing Key Laboratory of Optical Quantum Information Network, State Key Laboratory of Information Photonics and Optical Communications, School of Electronic Engineering, Beijing University of Posts and Telecommunications, Beijing, 100876, China
}%
\author{Yiming Bian}
\thanks{These authors contributed equally to this work.}
 \affiliation{%
 Beijing Key Laboratory of Optical Quantum Information Network, State Key Laboratory of Information Photonics and Optical Communications, School of Electronic Engineering, Beijing University of Posts and Telecommunications, Beijing, 100876, China
}%

\author{Song Yu}%
 \affiliation{%
 Beijing Key Laboratory of Optical Quantum Information Network, State Key Laboratory of Information Photonics and Optical Communications, School of Electronic Engineering, Beijing University of Posts and Telecommunications, Beijing, 100876, China
}%
\author{Zhengyu Li}%
 \email{lizhengyu2@huawei.com}
 \affiliation{%
 Huawei Technologies Co., Ltd., Shenzhen 518129, China
}%
\author{Yichen Zhang}%
 \email{zhangyc@bupt.edu.cn}
 \affiliation{%
 Beijing Key Laboratory of Optical Quantum Information Network, State Key Laboratory of Information Photonics and Optical Communications, School of Electronic Engineering, Beijing University of Posts and Telecommunications, Beijing, 100876, China
}%
\author{Hong Guo}%
 \email{hongguo@pku.edu.cn}
 \affiliation{%
 Beijing Key Laboratory of Quantum Sensing and Precision Measurement, and Center for Quantum Information \\ Technology, and
 Institute of Quantum Electronics, Peking University, Beijing 100871, China
}%

\date{\today}

\begin{abstract}
    Continuous-variable quantum key distribution is a compelling framework for scalable quantum networks due to its seamless integration with existing optical communication infrastructure. 
    However, a fundamental gap persists between theoretical protocols requiring ideal Gaussian modulation and the constrained, discrete-modulated signals dictated by practical high-speed hardware. 
    Current security proofs for discrete modulation rely on semidefinite programming, which suffers from prohibitive computational overhead for high-order constellations and lacks direct physical insight into non-Gaussian modulation.
    In this Letter, we overcome this limitation by developing a security framework that obviates semidefinite programming in favor of an approach grounded fundamentally in the Heisenberg uncertainty principle. By introducing a multi-mode entanglement-source model to characterize non-Gaussian state preparation, we establish an explicit mapping between constellation geometry and the secret key rate. This framework effectively quantifies the security implications of hardware-limited, finite state preparation, enabling both numerical and analytical security analysis under high-order constellations. We experimentally validate our method on both discrete-component and integrated photonic platforms, demonstrating that a quadrature amplitude modulation format with 256 constellation points can asymptotically approach the Gaussian capacity limit. 
    Beyond quantum key distribution, the principle of tightening uncertainty-constrained bounds via source-mode expansion offers a paradigm for exploring the information-theoretic properties of complex non-Gaussian systems.

\end{abstract}

\maketitle

Quantum key distribution (QKD) utilizes quantum mechanical principles to provide information-theoretical security for two remote parties \cite{bennet1984quantum, AdvInQC, PTPQKDRMV2020, portmann2022security}. Continuous-variable (CV) QKD protocols with coherent measurement exhibits high compatibility with standard telecommunications \cite{weedbrook2012gaussian,Zhang2024Continuous,usenko2026continuous}. There are two typical modulation schemes for CV protocols: Gaussian modulation \cite{ralph1999continuous,grosshans2002continuous, weedbrook2004quantum, bian2026approaching} and discrete modulation \cite{leverrier2009unconditional,li2018user,DMCVLeverrier, DMCVLinjie}.
The Gaussian-modulated protocol is theoretically optimal in terms of secret key rate (SKR).
In recent years, the Gaussian-modulated protocol has demonstrated strong performance in metropolitan-area links \cite{zhang2020long, jain2022practical, hajomer2024long, wang2025high}, free-space links \cite{FreeSpace2026long}, and quantum access networks \cite{hajomer2024passive, pan2025high}. At the same time, with rapid progress in digital signal processing \cite{fan2026practical} and photonic integration \cite{Bian2024Continuous, pietri2024experimental}, high-speed and low-cost CV-QKD implementations are now within reach.

Nevertheless, perfect Gaussian modulation is practically unattainable \cite{jouguet2012analysis}, especially in high-speed systems. As modulation baud rates increase, the finite resolution of hardware becomes a fundamental limitation, making discrete modulation an inevitable alternative for robust, high-speed CV-QKD systems \cite{hajomer2024continuous10GBaud, ng2026gigabit, wu2026high}. 
However, the non-Gaussian nature of the shared quantum state complicates security analysis, which is conventionally based on covariance matrix and the optimality of Gaussian states \cite{garcia2006unconditional, wolf2006extremality}. 
Current security analysis methods for discrete-modulated (DM) CV-QKD use semidefinite programming (SDP) to derive numerical bounds under multiple constraints \cite{DMCVLeverrier, DMCVLinjie, denys2021explicit}. 
While mathematically rigorous, such approaches often incur prohibitive computational overhead as the constellation size scales up, and obscures the direct physical insights into non-Gaussian modulation.

In this Letter, we propose a practical security analysis framework for arbitrary DM CV-QKD relying solely on the uncertainty principle constraint. 
By introducing a heterodyne detected ancillary mode in the entangled source, we increase the quantifiable correlation between the two legitimate parties, thereby tightening the estimation on the accessible information to a potential eavesdropper. This source-mode expansion allows us to utilize the most basic constraint to provide a feasible set of covariance matrix characterizing the overall system, in which the `worst' matrix determines the SKR lower bound. 
Furthermore, by exploiting the convexity of SKR, we reduce the worst-case search process by establishing a tightened, analytical search boundary. It enables both numerical and analytical calculation under high-order constellations.
We show that the performance of the proposed protocol approaches that of ideal Gaussian modulation with no more than 256 constellation points. Experimental demonstrations on both discrete-component and chip-based platforms achieve Mbps-level SKRs. Moreover, our approach also provides an interpretable correspondence between the modulation constellation geometry and the protocol performance.

\begin{figure}
    \centering
    \includegraphics[width= 7.5 cm]{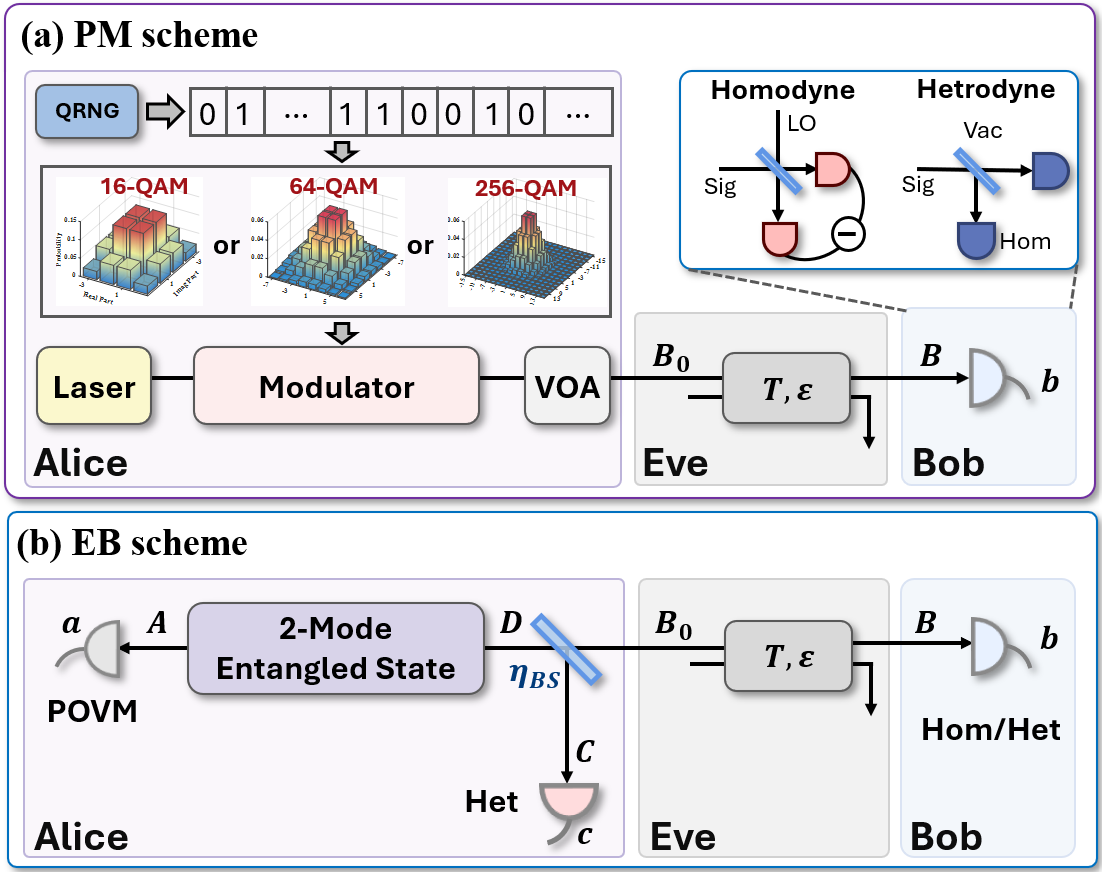}
    \caption{(a) The prepare-and-measure scheme of the proposed DM CV-QKD protocol. A quantum random number generator (QRNG) produces \(n\)-bit symbols, each of which is mapped to one of $2^n$ coherent states in the discrete modulation constellation. Inset plots show the probability distributions of three representative modulation formats: 16-QAM, 64-QAM, and 256-QAM. VOA, variable optical attenuator; Hom, homodyne measurement.
    (b) The entanglement-based scheme for the proposed protocol. 
    POVM, positive-operator valued measure. Het, heterodyne measurement.
    }\label{ProtocolEB}
\end{figure}

\textit{Protocol and security analysis---}
The prepare-and-measure (PM) scheme of the proposed protocol is illustrated in Fig. \ref{ProtocolEB} (a).
Alice prepares a coherent state $\rho_{B_0}$, randomly chosen from an $n$-coherent-state set \(\{\alpha_{1},\alpha_{2},\dots,\alpha_{n}\}\) with discrete probability distribution \(\{p_1,p_2,\dots,p_n\}\). Alice sends the state to Bob, and records the mean of quadratures for \(\rho_{B_0}\), as $(x_{B_0}, p_{B_0})$. Bob receives the state $\rho_{B}$ and performs homodyne or heterodyne measurement, obtaining the results $b$. Alice and Bob perform parameter estimation, information reconciliation, and privacy amplification to extract secret keys.




Building equivalent entanglement-based (EB) scheme is necessary for security analysis. While if we follow the traditional two-mode entangled source $(\rho_{AB_0})$ modeling method, Alice should perform a positive-operator valued measure (POVM) with finite elements rather than heterodyne detection on mode $A$. As a result, the covariance matrix characterizing the correlation between Alice and Bob cannot be directly constructed from the measured data, which complicates the security analysis.

We follow the intuition of QKD security originating from the monogamy of entanglement, namely that the stronger the quantifiable correlation between Alice and Bob, the less information is leaked to Eve. Specifically, we introduce an ancillary mode \(C\) on Alice's side to construct a multi-mode entangled source that enhances the quantifiable correlations with Bob. 
Mode $C$ is heterodyne measured, enabling to estimate the covariance matrix between $C$ and $B$. Meanwhile, modes $A$ and $C$ are internally correlated, thus helping establish the covariance matrix between Alice and Bob, $\gamma_{ACB}$, from which the maximum Holevo information $S_{BE}$ between Bob and Eve can be bounded.
$\gamma_{ACB}$ is expressed with several sub-matrices,
\begin{equation} 
    \gamma_{ACB}=
    \begin{pmatrix} 
            \gamma_A & \psi _{AC} & \kappa_{AB} \\ 
    \psi _{AC}^T & \gamma_C & \psi _{CB}\\
    \kappa_{AB}^T & \psi _{CB}^T &\gamma_B
    \end{pmatrix},
\end{equation} 
where $\gamma_A$, $\gamma_C$, and $\psi _{AC}$ can be theoretically calculated based on the entangled source. $\psi _{CB}$ and $\gamma_B$ can be estimated through the measured data. Only $\kappa_{AB} =(\kappa_{11}\ \kappa_{12}\ ;\ \kappa_{21}\ \kappa_{22})$ is unknown due to the POVM measurement, rendering $S_{BE}$ a function of $\kappa_{AB}$. Nevertheless, the uncertainty principle \cite{weedbrook2012gaussian} imposes a constraint on $\gamma_{ACB}$, namely $\gamma_{ACB}+i\Omega \ge 0$,
which restricts the value of $\kappa_{AB}$ within a set $\mathcal{F}_{\kappa_{AB}}$. Therefore, we can find the maximum information leakage $S_{BE}(\kappa_{AB}^{\mathbf{worst}})$ by traversing $\mathcal{F}_{\kappa_{AB}}$, and the SKR is defined as
\begin{equation} \label{Eq_SKR}
    K=\beta I(a:b)-\sup_{\kappa_{AB} \in \mathcal{F}_{\kappa_{AB}}}S_{BE}(\kappa_{AB}).
\end{equation} 

We develop a simple and effective EB scheme, in which the state sent to Bob is only decided by Alice's POVM result $a$. This means the sub-state $\rho_{CB_0}^a$ conditioned on $a$ is a product state, $\rho_{CB_0}^a=\rho_{C}^a\otimes\rho_{B_0}^a$ \cite{Supplemental}. So the mean values of $\rho_C^a$ can be designed to linearly depend on $\rho_{B_0}^a$. Following this way, one only needs to find a two-mode entangled state $|\Psi\rangle_{AD}$ and then let mode $D$ pass through a beam-splitter with transmittance $\eta_{BS}$, yielding modes $C$ and $B_0$, as shown in Fig. \ref{ProtocolEB} (b). The source is designed as $|\Psi\rangle_{AD}=\sum_{i=1}^{n}\sqrt{\nu_i}|\varphi_A^i\rangle |\theta_D^i\rangle $, where $|\varphi_A^i\rangle$ and $|\theta_D^i\rangle$ are orthogonal states that diagonalize the mixed states $\rho_A$ and $\rho_D$. The measurements on modes $A$ and $C$ project mode $B_0$ onto a state $\rho_{B_0}^{a,c}$, which corresponds to $\rho_{B_0}$ prepared in the PM scheme. The sending state is therefore specified by the source $|\Psi\rangle_{AD}$ and $\eta_{BS}$.

Moreover, we can find a simplified form for the heterodyne result of mode $C$ to estimate $\psi_{CB}$. Originally, the heterodyne result consists of the mean values of the quadratures for \(\rho_C\) and a random fluctuation, requiring another set of random numbers in the PM model.
However, since $\rho_{CB_0}^a$ is a product state, only the mean values of \(\rho_C\) are required \cite{Supplemental}, and the mean values can be directly calculated with the sending state $\rho_{B_0}$. Therefore, the PM scheme has no change in hardware comparing to the existing one-way CV-QKD system. 

We next introduce the numerical and analytical methods for SKR calculation. Since the direct evaluation of SKR with Eq. \ref{Eq_SKR} requires an exhaustive search over four unknowns of $\kappa_{AB}$, it suffers from a significant computational burden. To simplify the searching process, we symmetrize $\gamma_{ACB}$ into the symmetric form \(\gamma_{ACB}^{\mathrm{sym}}\), which includes only one unknown parameter $\kappa$, as shown in Fig. \ref{CMSym} (a). This approach not only simplifies the search process but also enables the derivation of an analytical boundary. We subsequently present both numerical and analytical methods for key rate calculation based on the analytical boundary.
The key step is the symmetrization operation, which is outlined in the following proposition.

\begin{figure}
    \centering
    \includegraphics[width= 8 cm]{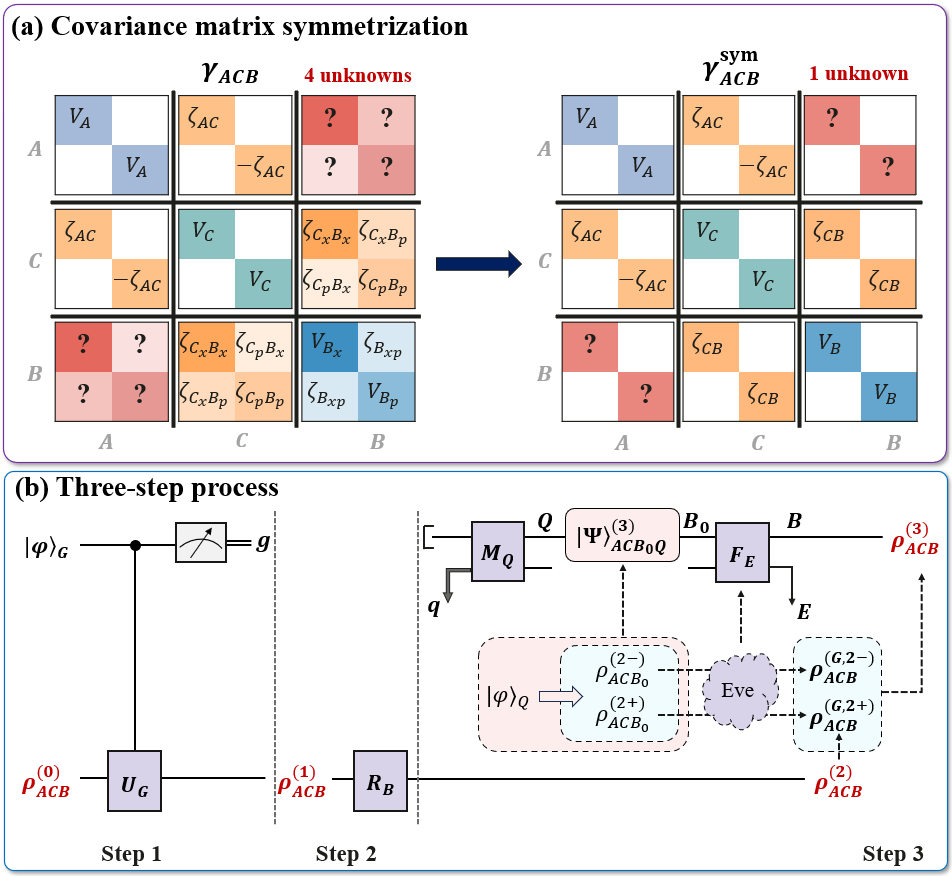}
    \caption{(a) Covariance matrix and its symmetric form after symmetrization. The white boxes represents 0. (b) Three steps for covariance matrix symmetrization. $\mathit{S}_1$: Controlled rotation. $\mathit{S}_2$: Local rotation. $\mathit{S}_3$: State mixture.
    }\label{CMSym}
\end{figure}


\begin{proposition}
Suppose the initial three-mode quantum state is $\rho_{ACB}^{(0)}$ with covariance matrix 
    \begin{equation} 
    \gamma_{ACB}^{(0)}=\gamma_{ACB}=
    \begin{pmatrix} 
    V_{A}I_{2} & \phi_{AC}\,\sigma_{Z} & \kappa_{AB}^{(0)} \\ 
    \phi_{AC}\,\sigma_{Z} & V_CI_{2} & \psi _{CB}^{(0)}\\
    (\kappa_{AB}^{(0)})^T & (\psi _{CB}^{(0)})^T &\gamma_B^{(0)}
    \end{pmatrix},
\end{equation} 
where the upper-left $4\times 4$ sub-matrix $\gamma_{AC}$ is already symmetric.
The remaining sub-matrices are $\kappa_{AB}^{(0)}=(\kappa_{11}\ \kappa_{12}\ ;\ \kappa_{21}\ \kappa_{22})$, $\psi _{CB}^{(0)}=(\phi_{11}\ \phi_{12}\ ;\ \phi_{21}\ \phi_{22})$, $\gamma_B^{(0)}=(b_{11}\ b_{12}\ ;\ b_{12}\ b_{22})$. Then there exist a state $\rho_{ACB}^{\mathrm{sym}}$, whose covariance matrix takes the symmetric form
\begin{equation}
\gamma_{ACB}^{\mathrm{sym}}=
\begin{pmatrix}
V_A I_2 & \phi_{AC}\sigma_Z & \kappa \sigma_Z \\
\phi_{AC}\sigma_Z & V_C I_2 & \phi I_2 \\
\kappa \sigma_Z & \phi I_2 & V_B I_2
\end{pmatrix},
\end{equation}
where $\phi=\sqrt{(\frac{\phi_{11}+\phi_{22}}{2})^2+(\frac{\phi_{12}-\phi_{21}}{2})^2}$, $\kappa=\frac{(\kappa_{11}-\kappa_{22})(\phi_{11}+\phi_{22})}{4\phi}
-\frac{(\kappa_{12}+\kappa_{21})(\phi_{12}-\phi_{21})}{4\phi}$, $V_B=(b_{11}+b_{22})/2$. 
The corresponding SKRs satisfy $K\big(\rho_{ACB}^{\mathrm{sym}}\big) \le K\big(\rho^{(0)}_{ACB}\big)$.
\end{proposition}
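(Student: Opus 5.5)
We follow the three steps sketched in Fig.~\ref{CMSym}(b): a controlled rotation $S_1$, a local rotation $S_2$, and a state mixture $S_3$. The first two are Gaussian unitaries (phase-space rotations) acting on Alice's and Bob's modes. They must leave the SKR invariant: $I(a:b)$ is unchanged when both parties relabel their quadrature data accordingly, and $S_{BE}$ is invariant under local unitaries on $B$ with a purifying Eve. The mixture step must not increase $K$. Since $K=\beta I(a:b)-S_{BE}$ is evaluated in the Gaussian-extremal way from the covariance matrix, it is convex in the state, or more precisely $S_{BE}$ is concave under the relevant classical mixing once the mixing label is given to Eve. Hence mixing can only lower $K$, giving $K(\rho^{\mathrm{sym}})\le K(\rho^{(0)})$.

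\emph{Step $S_1$ (joint rotation of $A$ and $C$).} Apply the rotation $R(\theta)$ on $C$ and $R(-\theta)$ on $A$. Because $\sigma_Z R(\theta)\sigma_Z=R(-\theta)$, this leaves $V_AI_2$, $V_CI_2$ and $\phi_{AC}\sigma_Z$ invariant. It maps $\psi_{CB}\mapsto R(\theta)^T\psi_{CB}$ and $\kappa_{AB}\mapsto R(-\theta)^T\kappa_{AB}$.

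\emph{Step $S_2$ (rotation on $B$).} Apply $R(\vartheta)$ on $B$, giving $\psi_{CB}\mapsto\psi_{CB}R(\vartheta)$ and $\kappa_{AB}\mapsto \kappa_{AB}R(\vartheta)$. Decompose every real $2\times2$ matrix as $M=M_++M_-$, with $M_+=aI_2+bJ$ (commuting with rotations) and $M_-=c\sigma_Z+dX$ (anticommuting with $J$). For $\psi_{CB}$, the $+$ part has $a=(\phi_{11}+\phi_{22})/2$ and $b=(\phi_{12}-\phi_{21})/2$, so its norm is exactly the stated $\phi$. We choose $\vartheta$ so that $\psi_{CB,+}$ becomes $\phi I_2$. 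The angle $\theta$ is still free and is reserved for the averaging step.

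\emph{Step $S_3$ (twirling/mixture).} Mix the state over the one-parameter family of joint rotations $R_A(-t)\oplus R_C(t)\oplus R_B(t)$ with $t$ uniform on $[0,2\pi)$, together with the analogous family that keeps $\sigma_Z$-type correlations. Averaging $R(t)^TMR(t)$ kills $M_-$ and keeps $M_+$. For $AB$, the relevant conjugation is $R(t)\,M\,R(t)$, and averaging it keeps only the $\sigma_Z$ component. This projection yields $\psi_{CB}\to\phi I_2$ and $\gamma_B\to V_BI_2$ with $V_B=(b_{11}+b_{22})/2$. It also yields $\kappa_{AB}\to\kappa\sigma_Z$, where $\kappa$ is the $\sigma_Z$-component of the rotated $\kappa_{AB}$, i.e.\ $\tfrac12\mathrm{tr}(\sigma_Z\kappa_{AB}R(\vartheta))$. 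With $\cos\vartheta=(\phi_{11}+\phi_{22})/(2\phi)$ and $\sin\vartheta=\pm(\phi_{12}-\phi_{21})/(2\phi)$, this gives the stated formula for $\kappa$. The averaged matrix is a valid covariance matrix because the bona fide condition $\gamma+i\Omega\ge0$ is convex and preserved by symplectic maps. Moreover, the mixture has the averaged covariance matrix since all first moments are zero, or are zeroed by an additional random displacement/sign flip.

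\emph{Main obstacle.} The hard part is justifying $K$-monotonicity of the mixture rigorously. Giving Eve the mixing label $t$ can only help her, so $S_{BE}(\text{mixture})\ge$ the average of $S_{BE}$ over $t$, which equals $S_{BE}(\rho^{(0)})$ by invariance. One must also check that $I(a:b)$ of the mixed state does not exceed that of $\rho^{(0)}$. Here we would invoke Gaussian extremality: the bound is computed from the covariance matrix, and the mutual information depends only on the diagonal $B$ variance and the $CB$ correlation, which the twirl preserves in trace form. A reasonable alternative is to prove directly that the Holevo function $S_{BE}(\gamma)$ is concave in $\gamma$ on the bona fide set and that the $I(a:b)$ term is at most linear in the averaged entries. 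Combining this with the invariance under $S_1,S_2$ gives $K(\rho^{\mathrm{sym}})\le K(\rho^{(0)})$.
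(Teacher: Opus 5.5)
\textbf{There is a genuine gap: your twirl never removes the off-diagonal part of $\kappa_{AB}$.} Your $S_1$--$S_3$ reproduce only the paper's Steps~1--2. Your continuous twirl over $R_A(-t)\oplus R_C(t)\oplus R_B(t)$ acts on the covariance matrix exactly like the paper's two-element controlled $\pi/2$ rotation, your $S_2$ is its local rotation on $B$, and your $S_1$ is absorbed into the twirl. The claim meant to finish the proof is a computational error.

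Under this family, $\kappa_{AB}\mapsto R(-t)\kappa_{AB}R(-t)$. The rotation-commuting part becomes $M_+R(-2t)$ and averages to zero. The anticommuting part, however, obeys $R(s)M_-R(s)=M_-R(-s)R(s)=M_-$, so \emph{both} its $\sigma_Z$ component and its $\sigma_X=(0\ 1;\ 1\ 0)$ component (your $X$) survive.

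You therefore land on the paper's $\gamma^{(2)}_{ACB}$, with $\kappa_{AB}=\kappa\,\sigma_Z+\Delta\kappa\,\sigma_X$ and $\Delta\kappa=[(\kappa_{12}+\kappa_{21})(\phi_{11}+\phi_{22})-(\kappa_{11}-\kappa_{22})(\phi_{12}-\phi_{21})]/(4\phi)$, which is generically nonzero. No ``analogous family'' of rotations can remove it. A mixture of $R_A(\alpha)\oplus R_C(\delta)\oplus R_B(\beta)$ keeps $\phi_{AC}\sigma_Z$ and produces $\phi I_2$ only if $\delta=\beta=-\alpha$ almost surely, and on that family $M_-$ is invariant. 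The operation that flips $\Delta\kappa$ while fixing every other block is $P=\sigma_Z\oplus\sigma_Z\oplus\sigma_Z$. It is anti-symplectic ($P\Omega P^T=-\Omega$), i.e.\ phase conjugation rather than a Gaussian unitary, so your ``invariance plus label to Eve'' argument does not reach it.

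\textbf{The missing step is the paper's Step~3.} It mixes the Gaussian state with covariance $\gamma^{(2+)}=\gamma^{(2)}$ with the Gaussian state whose covariance $\gamma^{(2-)}$ has $\Delta\kappa\to-\Delta\kappa$. It then argues that the latter is physical and gives the same key rate (equal symplectic spectra, Williamson's theorem, Gaussian optimality), and applies the convexity lemma.

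A direct way to get the needed equality is to note that $\gamma^{(2-)}=P\gamma^{(2+)}P$ is the covariance matrix of the complex-conjugated Gaussian state. Because $P_B=\sigma_Z$ commutes with the heterodyne and homodyne projections, the $b$-conditioned matrix also transforms as $\gamma_{AC|b}\mapsto P_{AC}\gamma_{AC|b}P_{AC}$. Congruence by $P$ preserves symplectic spectra, so $S^G_{BE}(\gamma^{(2-)})=S^G_{BE}(\gamma^{(2+)})$. Your label-to-Eve argument combined with Gaussian extremality then gives $S^G_{BE}(\gamma^{\mathrm{sym}})\ge S^G_{BE}(\gamma^{(2)})$.

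\textbf{A smaller issue: the sign of $\vartheta$ is not free.} In $\sin\vartheta=\pm(\phi_{12}-\phi_{21})/(2\phi)$, requiring $\psi_{CB}\to\phi I_2$ fixes $\vartheta$. The resulting $\sigma_Z$ coefficient is $[(\kappa_{11}-\kappa_{22})(\phi_{11}+\phi_{22})+(\kappa_{12}+\kappa_{21})(\phi_{12}-\phi_{21})]/(4\phi)$. You can check this on a symmetric state rotated on $B$, which must return its original $\kappa$. This is also what the paper's own Step~2 formulas give ($\kappa^{(2)}=\kappa^{(1)}\cos\theta+\Delta\kappa^{(1)}\sin\theta$ with $\tan\theta=\Delta\phi^{(1)}/\phi^{(1)}$). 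So the minus sign in the statement appears to be a typo, and you cannot reproduce it by a choice of sign.
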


\begin{proof}
The proof relies on an important lemma, which is the convexity of the SKR: for a mixture of states with same SKRs \cite{Supplemental},
\begin{equation} 
    K\big(\sum_{i=1}^2 p_i \rho_i\big) \le \sum_{i=1}^2 p_iK(\rho_i)=K(\rho_i),
\end{equation} 
where $\sum_i p_i=1$. 
The symmetrization consists of three steps \cite{Supplemental}, as illustrated in Fig.~\ref{CMSym}(b).

\textit{$\mathbf{Step}\ 1$: Controlled rotation.}
Apply a controlled rotation $U_G=|0\rangle\langle0|_G\otimes I_{ABC} +|1\rangle\langle1|_G\otimes R_A\!\left(\tfrac{\pi}{2}\right)\otimes R_{BC}\!\left(-\tfrac{\pi}{2}\right)$, yielding the equally mixed state and covariance matrix
\begin{equation}
   \rho_{ACB}^{(1)}=\frac{\rho_{ACB}^{(0)}+\rho_{ACB}^{\mathrm{(0R)}}}{2},\quad \gamma_{ACB}^{(1)}=\frac{\gamma_{ACB}^{(0)}+\gamma_{ACB}^{\mathrm{(0R)}}}{2},
\end{equation} 
where the new sub-matrices are $\gamma_{B}^{(1)}=V_BI_2$ and $V_B=(b_{11}+b_{22})/2$, $\kappa_{AB}^{(1)}=(\kappa^{(1)}\ \Delta\kappa^{(1)} \ ;\ \Delta\kappa^{(1)}\ -\kappa^{(1)})$, $\psi _{CB}^{(1)}=(\phi^{(1)}\ \Delta\phi^{(1)} \ ;\ -\Delta\phi^{(1)}\ \phi^{(1)})$, where $\kappa^{(1)}=(\kappa_{11}-\kappa_{22})/2$, $\Delta\kappa^{(1)}=(\kappa_{12}+\kappa_{21})/2$, $\phi^{(1)}=(\phi_{11}+\phi_{22})/2$, $\Delta\phi^{(1)}=(\phi_{12}-\phi_{21})/2$. It's obvious that $K(\rho_{ACB}^{(0)})=K(\rho_{ACB}^{\mathrm{(0R)}})$ because the rotation is a unitary operation. With the convexity of SKR, we have $K\big(\rho_{ACB}^{(1)}\big) \le K\big(\rho^{(0)}_{ACB}\big)$.

\textit{$\mathbf{Step}\ 2$: Local rotation.} 
A local rotation $R_B$ with symplectic matrix $X=(\cos \theta\ -\sin \theta\ ;\ \sin \theta\ \cos \theta)$ is applied to mode $B$, yielding $\gamma_{ACB}^{(2)}=(I_{4} \oplus X)\gamma_{ACB}^{(1)}(I_{4} \oplus X)^T$.
Choosing $\tan \theta= (\Delta\phi^{(1)})/(\phi^{(1)})$, the sub-matrix is symmetrized as $\psi _{CB}^{(2)}=X\psi _{CB}^{(1)}X^T=\phi I_2$, and $\phi =\sqrt{(\phi^{(1)})^2 + (\Delta\phi^{(1)})^2}$. 
Similarly, $\kappa _{AB}^{(2)}=X\kappa _{AB}^{(1)}X^T=(\kappa^{(2)}\ \Delta\kappa^{(2)}\ ;\ \Delta\kappa^{(2)}\ -\kappa^{(2)})$,
where $\kappa^{(2)}=\kappa^{(1)} \cos \theta + \Delta\kappa^{(1)} \sin \theta$, and $\Delta\kappa^{(2)}=-\kappa^{(1)} \sin \theta + \Delta\kappa^{(1)} \cos \theta$.
This unitary operation does not change SKR, therefore we have $K\big(\rho_{ACB}^{(2)}\big) = K\big(\rho_{ACB}^{(1)}\big)$.

\textit{$\mathbf{Step}\ 3$: State mixture.}
Consider a Gaussian state $\rho_{ACB}^{(G,2+)}$ having covariance matrix $\gamma_{ACB}^{(2+)}$ ($=\gamma_{ACB}^{(2)}$). Let $\rho_{ACB}^{(G,2-)}$ denote a Gaussian state with covariance matrix $\gamma_{ACB}^{(2-)}$, obtained by replacing $\Delta\kappa^{(2)}$ in $\gamma_{ACB}^{(2+)}$ with $-\Delta\kappa^{(2)}$. The equally mixed state and its covariance matrix are
\begin{equation}
    \rho_{ACB}^{\mathrm{sym}}=\frac{\rho_{ACB}^{(G,2+)}+\rho_{ACB}^{(G,2-)}}{2},\quad  \gamma_{ACB}^{\mathrm{sym}}=\frac{\gamma_{ACB}^{(2+)}+\gamma_{ACB}^{\mathrm{(2-)}}}{2},
\end{equation}
where the sub-matrix $\kappa _{AB}^{\mathrm{sym}}=\kappa \sigma_Z$ and $\kappa=\kappa^{(2)}$. Since the symplectic eigenvalues for $\gamma^{(2+)}_{ACB}$ and $\gamma_{ACB}^{(2-)}$ are the same, then $\rho_{ACB}^{(G,2+)}$ and $\rho_{ACB}^{(G,2-)}$ yield the same SKR. Following the convexity of SKR and the Gaussian optimality theorem, $K\big(\rho_{ACB}^{\mathrm{sym}}\big) \le K\big(\rho_{ACB}^{(G,2+)}\big) \le K\big(\rho_{ACB}^{(2)}\big)$.

Ultimately, we obtain a scaling chain of SKR 
\begin{equation}
        K\big(\rho_{ACB}^{\mathrm{sym}}\big) \le K\big(\rho_{ACB}^{(2)}\big)
        =K\big(\rho_{ACB}^{(1)}\big) \le K\big(\rho^{(0)}_{ACB}\big),
\end{equation}
which demonstrates that we can calculate the SKR lower bound with a symmetric matrix, thus reducing the problem to solving for the single parameter $\kappa$. 
\end{proof}

Regarding the analytical boundary, by solving the uncertainty-principle constraint, $\gamma_{ACB}^{\mathrm{sym}}+i\Omega \ge 0$, the interval of $\kappa$ is obtained as $\kappa \in \mathcal{F}_k=[\bar{\kappa}-R, \bar{\kappa}+R]$, where 
\begin{subequations}
    \begin{align}
        \bar{\kappa} &= \sqrt{T_{cb}\,\eta_{A}\,(1-\eta_{BS})\big(V_{A}^{2}-1\big)},\\
R &= \sqrt{T_{cb}\,\varepsilon_{cb}\,(1-\eta_{A})\,(V_{A}+1)}.
    \end{align}
\end{subequations}
The parameters \(T_{cb}\) and \(\varepsilon_{cb}\) are defined from the symmetrized two-mode covariance matrix \(\gamma_{CB}^{\mathrm{sym}}\) of modes \(B\) and \(C\) as $T_{cb}=\frac{\phi_{CB}^2}{(V_C-1)^2}$ and $\varepsilon_{cb}=\frac{V_B-1}{T_{cb}}-(V_C-1)$ \cite{Supplemental}.
Finally, Eq. \ref{Eq_SKR} can be updated as
\begin{equation} 
    K=\beta I(a:b)-\sup_{\kappa \in [\bar{\kappa}-R, \bar{\kappa}+R]}S_{BE}(\kappa),
\end{equation}
offering a numerical key rate in our scheme.


\begin{figure}[t]
    \centering
    \includegraphics[width= 8.5 cm]{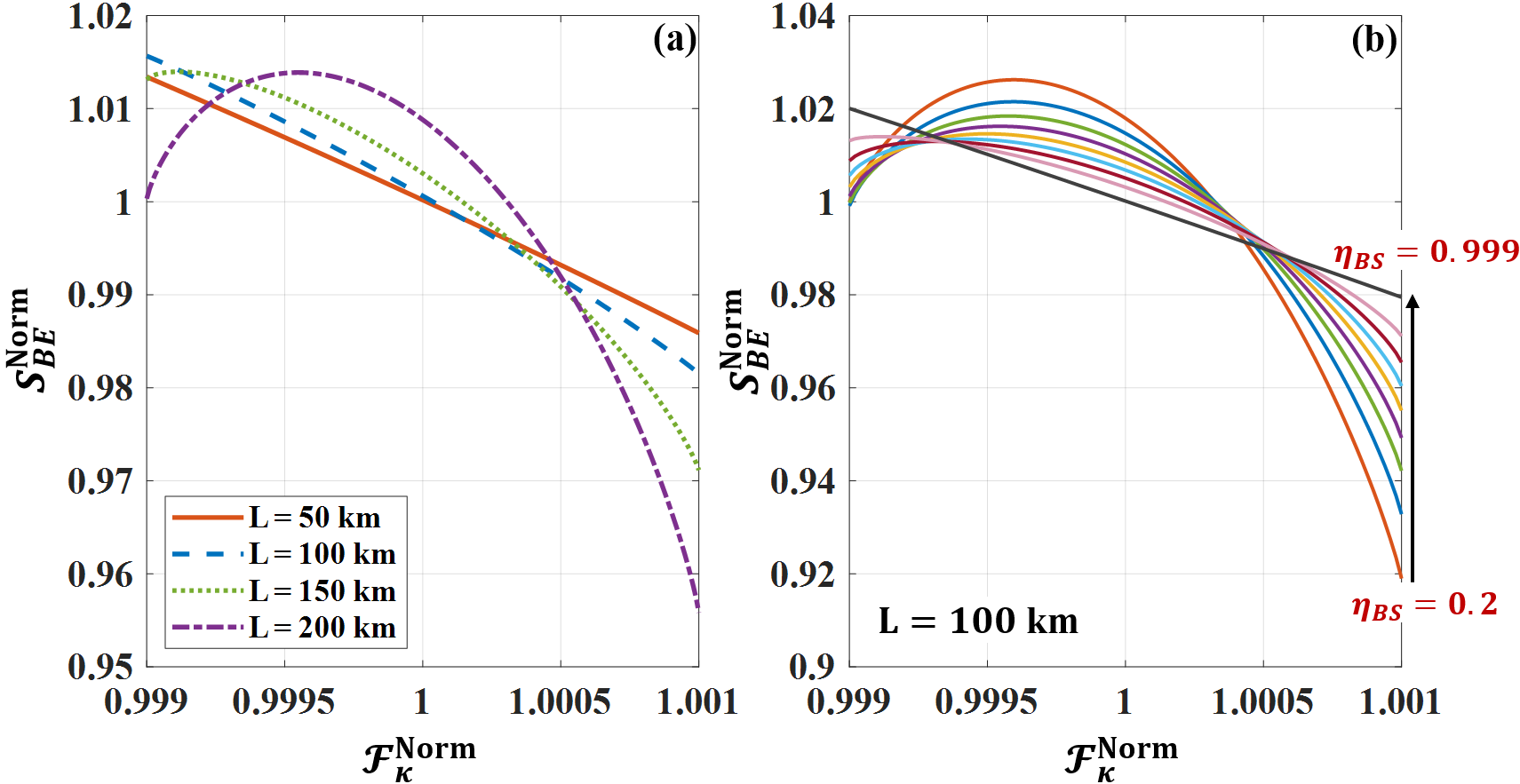}
    \caption{(a) The variation trend of normalized $S_{BE}^{\mathrm{Norm}}$ within the normalized interval $\mathcal{F}_{\kappa}^{\mathrm{Norm}}$ under different transmission distances $L$. (b) The variation trend of normalized $S_{BE}^{\mathrm{Norm}}$ within the normalized interval $\mathcal{F}_{\kappa}^{\mathrm{Norm}}$ under different $\eta_{BS}$ at 100 km. $\mathcal{F}_{\kappa}^{\mathrm{Norm}}=[-R/\bar{\kappa},\ R/\bar{\kappa}]$. 
    Simulation parameters: excess noise $\varepsilon=0.04$ SNU, detection efficiency $\eta=0.6$, electronic noise $\nu_{ele}=0.1$ SNU. 
    The modulation variance $V_M=4$ SNU, and $\eta_A=1-10^{-4}$.
    }\label{SBE}
\end{figure}

Numerically, with a bisection search over the interval $\mathcal{F}_{\kappa}$ to locate the true $\kappa_t$ and the maximum \(S_{BE}^{G}(\kappa_t)\), it requires roughly 16 iterations to reach the desired error tolerance. The tightness of the boundary reduces the computational costs. 

For the analytical lower bound of SKR, one may intuitively expect it to occur at the left boundary $\kappa_l=\bar{\kappa}-R$ of \(\mathcal{F}_{\kappa}\), since weaker correlations between Alice and Bob generally increase Eve's accessible information and thus reduce the SKR. However, this intuition does not always hold for the three-mode covariance matrix. 
As shown in Fig. \ref{SBE} (a), \(S_{BE}^{\mathrm{Norm}}\), which obtained by normalizing \(S_{BE}\) with its mean value, decreases monotonically over the normalized interval $\mathcal{F}_{\kappa}^{\mathrm{Norm}}$ at short distances ($L=50$ and 100 km). While at long distances ($L=150$ and 200 km), it exhibits a nonmonotonic trend, first increasing and then decreasing.
Fig. \ref{SBE} (b) shows the influence of $\eta_{BS}$ on $S_{BE}$ over $\mathcal{F}_{\kappa}^{\mathrm{Norm}}$ at a fixed transmission distance of 100 km, where the modulation variance is set as $V_M=4$ SNU, and $V_A=V_M/\eta_{BS}+1$. 
As $\eta_{BS}$ increases, the variation of $S_{BE}^{\mathrm{Norm}}$ becomes smoother and more monotonic, and the relative deviation between the true worst-case point and the left boundary \(\kappa_{\mathrm{left}}\) roughly decreases. In the limit $\eta_{BS} \to 1$, the worst-case point approaches the left boundary, indicating that the characteristics of the three-mode matrix gradually converge to those of a two-mode matrix.
To gain insight into this phenomenon, we apply an entanglement attack model \cite{Supplemental}. When approaching the left physical boundary, Eve's state becomes nontrivial and entangled, which may partially weakens the eavesdropper's ability. However, the specific characteristics of the three-mode state still require further exploration.

\begin{figure}[t]
    \centering
    \includegraphics[width= 8.5 cm]{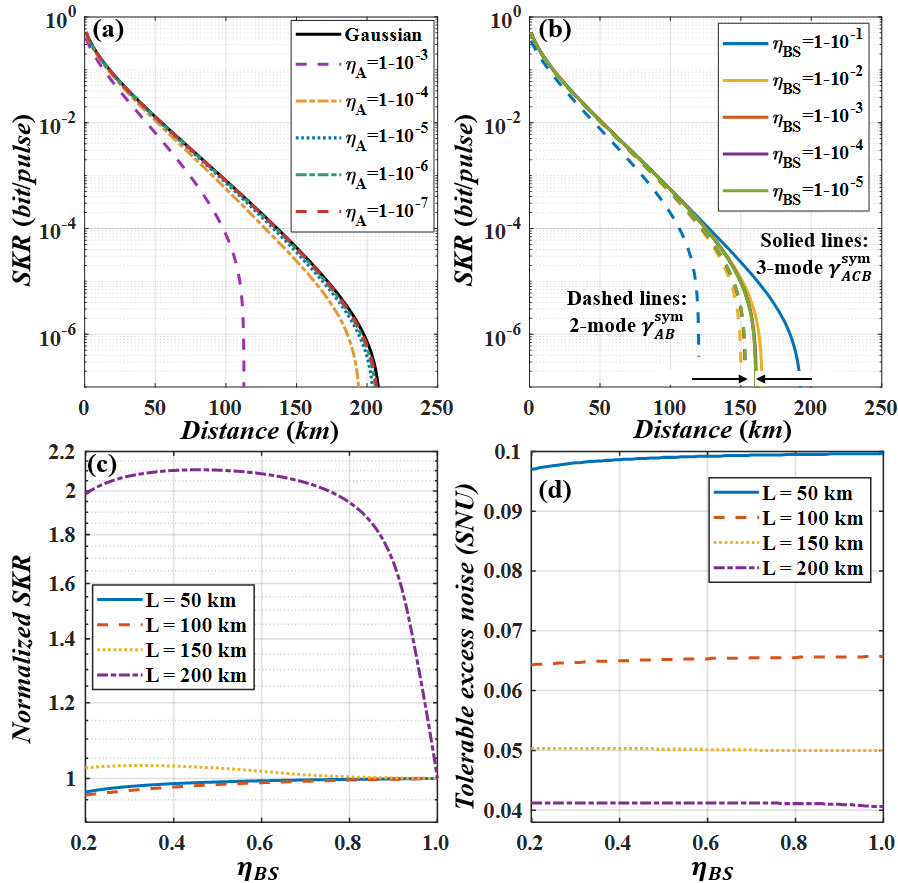}
    \caption{
        (a) SKR of the DM protocol for different $\eta_A$ compared with the ideal Gaussian protocol, with a fixed mode variance $V_A=5$ SNU and $\eta_{BS}=0.9$. (b) SKR based on 3-mode matrix $\gamma_{ACB}^{\mathrm{sym}}$ and 2-mode matrix $\gamma_{AB}^{\mathrm{sym}}$ with different $\eta_{BS}$. (c) Normalized SKR versus \(\eta_{BS}\) under different transmission distances, where the SKR is normalized to its value at \(\eta_{BS}=1-10^{-5}\). (d) Tolerable excess noise versus \(\eta_{BS}\) under different distances.
    Simulation parameters: reconciliation efficiency $\beta=0.95$, excess noise $\varepsilon=0.04$ SNU, detection efficiency $\eta=0.6$, electronic noise $\nu_{ele}=0.1$ SNU \cite{Supplemental}. $V_M=4$ and $\eta_A=1-10^{-4}$ for (b), $V_M=4$ and $\eta_A=1-10^{-6}$ for (c) and (d).
    }\label{SimulationPerformance}
\end{figure}

To obtain a rigorous analytical key rate lower bound, we discard mode \(C\) from the 3-mode covariance matrix $\gamma_{ACB}^{\mathrm{sym}}$ and obtain the 2-mode matrix \(\gamma_{AB}^{\mathrm{sym}}\). Then the lower bound of SKR based on \(\gamma_{AB}^{\mathrm{sym}}\) can be calculated analytically at the left boundary $\kappa_l=\bar{\kappa}-R$. This may come at the cost of a reduced key rate, with less than $2\%$ degradation for distances below 150 km when $\eta_{BS}$ approaches 1, whereas a more noticeable performance loss is observed beyond 150 km.

\begin{figure*}[t]
    \includegraphics[width= 17 cm]{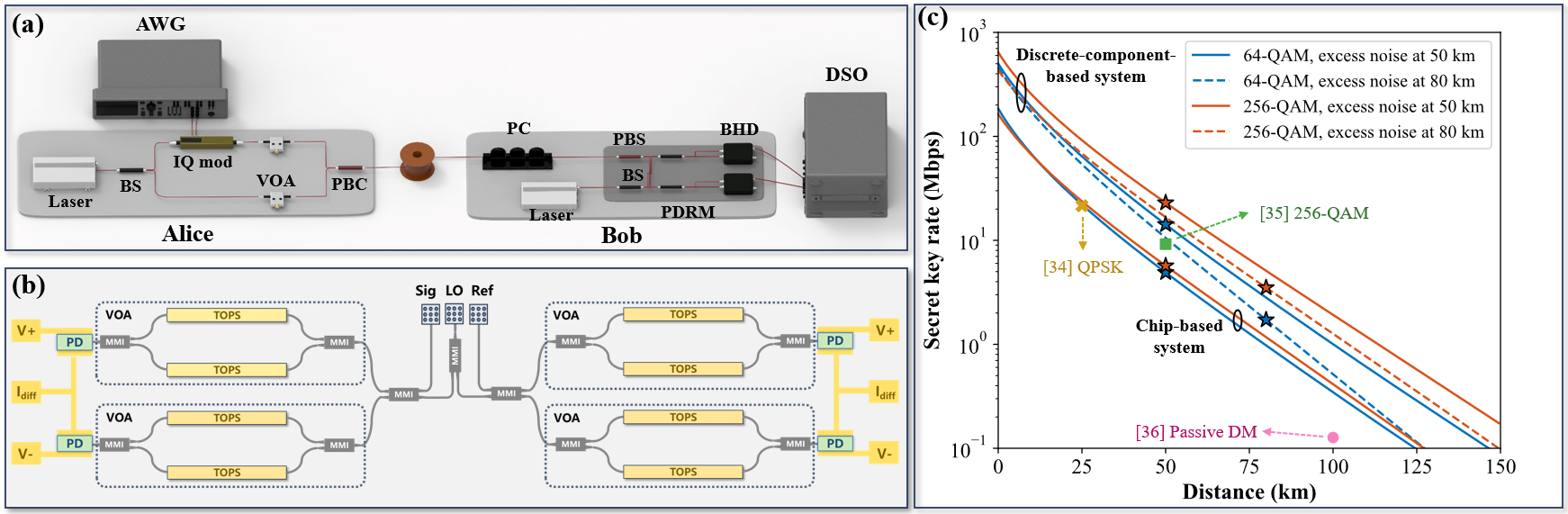}
    \caption{\label{Exp_Scheme}
    (a) Experimental setup of the DM CV-QKD system based on discrete component. AWG, arbitrary waveform generator; IQ mod, in-phase/quadrature modulator; VOA, variable optical attenuator; PC, polarization controller; PBC, polarization beam combiner; PBS, polarization beam splitter; BS, beam splitter; PMOC, polarization-maintaining optical couplers; BHD, balanced homodyne detector; PDRM, polarization diversity receiver module; DSO, digital storage oscilloscope.
    (b) The integrated silicon photonic receiver chip. PD, photodiode; MMI, multi-mode interferometer. 
    (c) Experimental key rates and numerical simulations. The solid and dashed lines are the numerical simulations of SKRs using experimental parameters in discrete-component-based and chip-based system, along with six pointed stars corresponding to the experimental results, which are compared with the existing experimental results in different modulation schemes represented by yellow cross \cite{wang2022sub}, green square \cite{pan2022experimental}, pink circle \cite{PassiveDM2026}.
    }
\end{figure*}

\textit{Results---}
Numerical simulations are conducted to present the performance of the protocol. 
Firstly, A key factor affecting SKR is the design of entangled source. We define a dimensionless parameter $\eta_A$ to quantify the deviation between $|\Psi\rangle_{AD}$ and an ideal two-mode squeezed-vacuum state $|\Psi\rangle_{AA_0}$ with the same $V_A$ \cite{Supplemental}. As shown in Fig. \ref{SimulationPerformance} (a), for a fixed mode variance $V_A=5$, larger $\eta_A$ yields better performance. The results further indicate that $\eta_A$ should be at least on the order of $1-10^{-4}$ to avoid significant performance degradation. 
Moreover, when $\eta_A$ is on the order of $1-10^{-7}$, the SKR becomes nearly indistinguishable from that of the ideal Gaussian protocol. Such a source can be realized using no more than 256 constellation points \cite{Supplemental}.
Fig.~\ref{SimulationPerformance}(b) shows the SKR under different \(\eta_{BS}\) calculated from the 3-mode matrix $\gamma_{ACB}^{\mathrm{sym}}$ and 2-mode matrix $\gamma_{AB}^{\mathrm{sym}}$, where the same transmitted states are considered with a 64-QAM constellation and modulation variance \(V_M=4\). Overall, the SKR from $\gamma_{AB}^{\mathrm{sym}}$ is slightly lower than that from $\gamma_{ACB}^{\mathrm{sym}}$. As \(\eta_{BS}\) increases, the SKR of the two-mode scheme gradually increases, while for the three-mode scheme, the SKR decreases at long distances but increases at short distances. Consequently, the discrepancy between the SKRs gradually diminishes, and the corresponding curves eventually converge to a same one.
Fig.~\ref{SimulationPerformance} (c) and (d) show the normalized SKR and the maximum tolerable excess noise as functions of \(\eta_{BS}\) under different transmission distances. Both quantities exhibit similar behavior: increasing \(\eta_{BS}\) generally improves the performance, particularly in the short-distance regime, while the improvement gradually saturates as \(\eta_{BS}\rightarrow 1\). At longer distances, the enhancement becomes weaker and may even slightly decrease in the high-\(\eta_{BS}\) regime. This reflects the trade-off introduced by the beam-splitter transmittance.

The proposed DM protocol is experimentally demonstrated on both a discrete-component platform and a silicon photonic integrated platform.
Fig.~\ref{Exp_Scheme}(a) shows the experimental setup of the discrete-component system employing a polarization diversity receiver module (PDRM). In the integrated platform, the PDRM is replaced by a silicon photonic receiver chip, as illustrated in Fig.~\ref{Exp_Scheme}(b).
Discrete modulation is implemented using an IQ modulator followed by a variable optical attenuator. After transmission through the fiber link, the quantum signal is detected at the receiver. For simplicity and accuracy, the one-time shot noise unit calibration method is adopted \cite{zhang2020one}. The detection efficiencies of the PDRM and the silicon photonic receiver are 0.6366 and 0.2196.
The experimental results are shown in Fig. \ref{Exp_Scheme} (c). The four upper curves display the simulated SKRs for the discrete-module-based system under experimental parameters, while the two lower curves correspond to the simulated results for the chip-based system. The experiments are conducted over fiber optic distances of 50 km and 80 km, using 64-QAM and 256-QAM modulation formats, with reconciliation efficiencies of 0.92 and 0.95, respectively.
The experimental key rates are marked with stars, and compared with existing literatures \cite{wang2022sub, pan2022experimental, PassiveDM2026}. It can be seen that, within our framework, the 256-QAM scheme demonstrates a significant advantage in performance compared to other existing results, achieving SKRs of 22.164 Mbps at 50 km and 3.642 Mbps at 80 km. The 64-QAM scheme remains advantageous at 50 km, but its performance degrades more rapidly with distance. The chip-based receiver exhibits higher loss, yielding lower SKRs than the discrete-component system, while using high-order constellation schemes still enables effective key generation.

\textit{Discussions---} 
In this work, we have theoretically proposed and experimentally demonstrated a practical security analysis framework for DM CV-QKD based on source-mode expansion and the Heisenberg uncertainty principle. By introducing an ancillary mode at the source, the security analysis is transformed from a complex optimization problem into a direct evaluation over a physically constrained feasible set, enabling fast calculation of SKR from experimentally accessible covariance matrix. The framework is applicable to arbitrary discrete constellations and can achieve high performance approaching that of ideal Gaussian modulation. Beyond providing a practical security bound, it also establishes an intuitive connection between constellation geometry and protocol performance, offering new insights into the design of discrete-modulation formats. Although the present work focuses on QAM constellations, the framework can be readily extended to other modulation formats. In particular, understanding how asymmetric constellations, such as PSK, influence security performance remains an interesting direction for future investigation.

Our results provide a practical route toward ultra-high-speed, low-cost, and integrated CV-QKD systems. Discrete modulation is inherently compatible with high-speed electro-optic hardware and existing optical communication infrastructure, facilitating higher baud rates while reducing implementation complexity and deployment costs. Furthermore, the successful demonstration on a silicon photonic receiver platform highlights the feasibility of chip-scale integrated CV-QKD and identifies key engineering parameters, including coupling loss and detection efficiency, that directly impact transmission distance and secret key rate. Looking forward, the proposed framework can be naturally extended to multi-user quantum access networks and other large-scale quantum communication architectures, thereby supporting the widespread deployment and commercialization of CV-QKD technologies.

This research was supported by the National Cryptologic Science Fund of China (2025NCSF02050), the National Natural Science Foundation of China (U24B20135).

\bibliography{apssamp}

\clearpage

\appendix

\onecolumngrid
\section*{Appendix for ``Discrete-Modulated Continuous-Variable Quantum Key Distribution with Uncertainty Principle"}
\twocolumngrid

\section{Theoretical methods}
This section provides the theoretical details. Under our framework, it is necessary to introduce ancillary entangled modes into the source and perform heterodyne detection on them to assist the security analysis. We first generalize the source to an $N$-mode entangled source $|\Psi\rangle_{A\mathbf{C}B_0}$ in Sec. \ref{Nmode}, where $\mathbf{C}=C_1C_2\dots C_{N-2}\ (N>2)$. 
Then we prove the product feature of the sub-state $\rho_{\mathbf{C}B_0}^a$ conditioned on $a$ in Sec. \ref{POVM}. By exploiting this feature, we provide a simple representation of $\rho_{\mathbf{C}}$ in Sec. \ref{simply_rhoC}, which removes the need of practical simulation of $c$ in the PM scheme. 
Then in Sec. \ref{SourceDesign}, we explain the design for three-mode entangled source and present the optimal results for source parameters under quadrature-amplitude modulation (QAM) cases. 
For the security analysis, we first derive the SKR formula based on the covariance matrix in Sec. \ref{SKR}, then we prove the convexity of SKR in Sec. \ref{Convexity}. With this poverty, we accomplish the symmetrization of the 3-mode covariance matrix in Sec. \ref{symmetrization}, and solve the only unknown parameter $\kappa$ in Sec. \ref{kappa}. Furthermore, to explain the nonlinear variation of the SKR within the range of $\kappa$, we introduce an attack model in Sec. \ref{model} and provide a physical interpretation. We then extend the symmetrization procedure to incorporate detector modeling and demonstrate its compatibility with practical detection schemes in Sec. \ref{Compatibility}. Finally, in Sec. \ref{Simulation} we provide a series of simulations, showing that the schemes using high-order constellations can achieve performance close to that of the ideal Gaussian protocol without increasing computational complexity. These results demonstrate the advantages in both high performance and efficiency, suggesting the potential for large-scale applications of DM CV-QKD systems.

\subsection{$N$-mode entangled source}\label{Nmode}

Consider the EB scheme of a protocol, the state sent by the transmitter Alice usually depends on the entangled source she prepares and the measurements she performs. The most common case is Alice performing heterodyne detection on one mode of an ideal two-mode squeezed vacuum state, thereby projecting the other mode onto a infinite Gaussian state.
While to send a finite discrete-distributed coherent state, POVM measurement other than heterodyne detection should be introduced at the source. However, POVM measurement alone is insufficient to complete the security analysis. So we choose to inject more entangled modes $\mathbf{C}$ at the source to compete with eavesdropper and assist the security analysis. 
The $N$-mode entangled source is shown in Fig. \ref{NmodeEB}. Alice generates an $N$-mode entangled state $|\Psi\rangle_{A\mathbf{C}B_0}$, which is an purification of the mixed state $\rho_{B_0}$, where \(\mathbf{C}=C_1C_2\dots C_{N-2}\ (N > 2)\). Alice keeps modes $A$ and $\mathbf{C}$, and sends mode $B_0$ to the receiver Bob. The measurements for mode $A$ is POVM, obtaining the result $a$, and the measurements for modes $C$ are heterodyne measurements, with results recorded as $\mathbf{c}=\{c_1,\ c_2,\ \dots,\ c_{N-2}\}$. These measurements will project mode $B_0$ onto a state $\rho_{B_0}^{a,\mathbf{c}}=\{\rho_{B_0}^1,\ \rho_{B_0}^2,\ \dots,\ \rho_{B_0}^n,\ \}$.
\begin{figure}
    \centering
    \includegraphics[width= 6 cm]{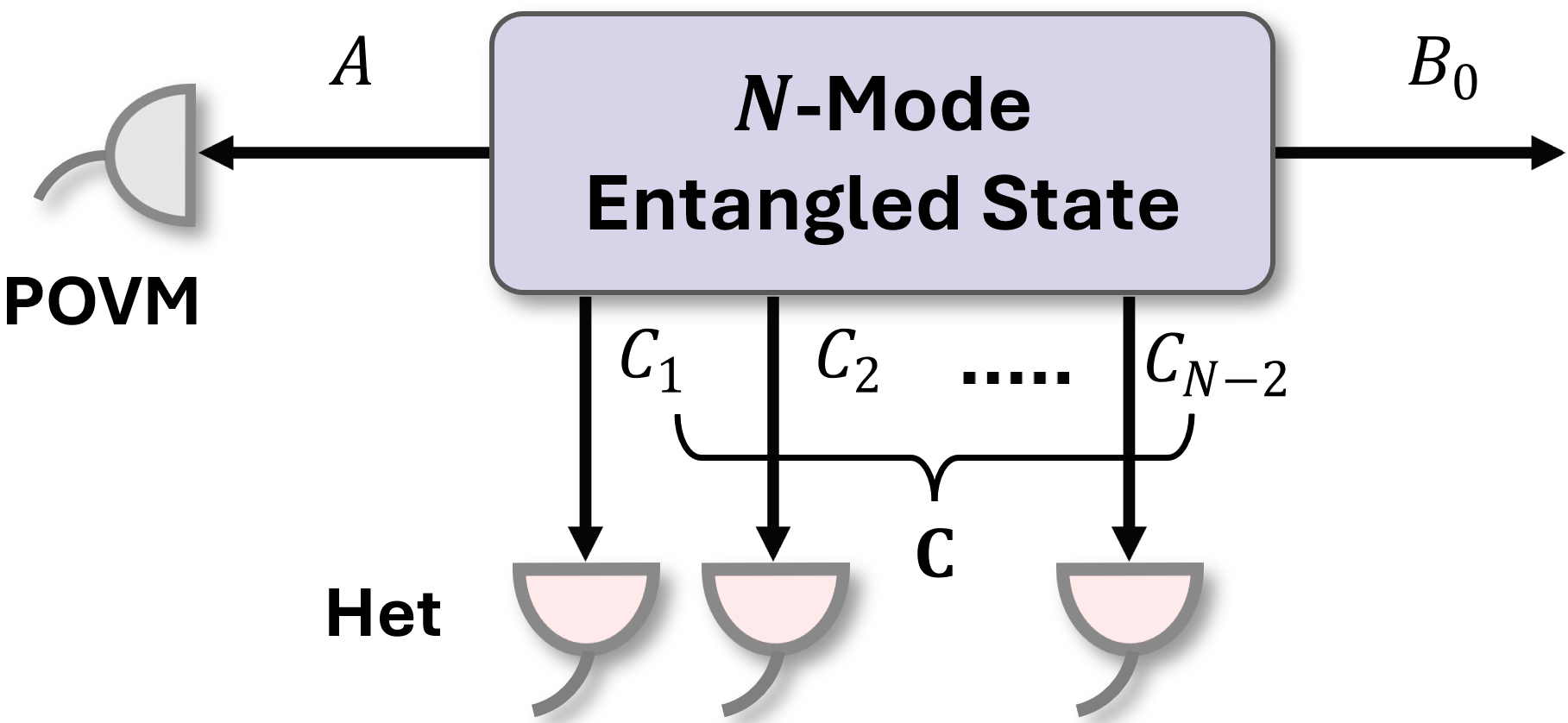}
    \caption{The $N$-mode entangled source. The source generates an \(N\)-mode entangled state, where mode \(A\) is measured by positive-operator valued measure (POVM), and modes \(\mathbf{C}=C_1C_2\dots C_{N-2}\ (N > 2)\) are measured by heterodyne detection (Het). mode \(B_0\) is sent to the receiver side.
    }\label{NmodeEB}
\end{figure}
\subsection{The sub-state $\rho_{\mathbf{C}B_0}^a$ is a product state}\label{POVM}
Since the sending state $\rho_{B_0}$ is only determined by the POVM result $a$ and is independent of the heterodyne detection results $\mathbf{c}$, it is necessary to require that the the sub-state $\rho_{\mathbf{C}B_0}^a$ conditioned on $a$ should be a product state, that is $\rho_{\mathbf{C}B_0}^a=\rho_{\mathbf{C}}^a\otimes\rho_{B_0}^a$.
To explain this necessity, we first prove a lemma which is

\begin{lemma}
For any two-mode entangled state $\rho_{AB}$, if perform heterodyne detection on mode $A$, mode $B$ is projected onto coherent states, then the number of possibly projected coherent states for mode $B$ is either one or infinite.
\end{lemma}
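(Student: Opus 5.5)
We would prove the lemma by writing the map from heterodyne outcomes on $A$ to conditional states on $B$ explicitly and showing that it is real-analytic in the outcome. A real-analytic map into the set of coherent states that takes at least two values cannot take only finitely many values.

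\textbf{Setup.} Heterodyne detection on $A$ is described by the POVM $\{\pi^{-1}|\beta\rangle\langle\beta|_A\}_{\beta\in\mathbb{C}}$. For outcome $\beta$ the unnormalized conditional state is
\begin{equation}
\tilde\rho_B(\beta)=\pi^{-1}\langle\beta|_A\rho_{AB}|\beta\rangle_A .
\end{equation}
Its trace $p(\beta)=\pi^{-1}\langle\beta|\rho_A|\beta\rangle$ is the Husimi $Q$-function of $\rho_A$. The first step is to note that $p(\beta)$ is real-analytic in $(\mathrm{Re}\,\beta,\mathrm{Im}\,\beta)$, and likewise $\tilde\rho_B(\beta)$ (say, its matrix elements in the Fock basis). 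The reason is that $|\beta\rangle=e^{-|\beta|^2/2}\sum_n \beta^n(n!)^{-1/2}|n\rangle$. Hence $e^{|\beta|^2}\langle\beta|X|\beta\rangle$ is an entire function of $(\beta,\bar\beta)$ for any trace-class $X$, since the coefficients $X_{mn}$ are bounded.

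\textbf{Mean-value map.} By hypothesis, every outcome with $p(\beta)>0$ projects $B$ onto a coherent state $|\alpha(\beta)\rangle$. Its amplitude can be read off from the first moment,
\begin{equation}
\alpha(\beta)=\frac{\mathrm{Tr}[\hat b\,\tilde\rho_B(\beta)]}{p(\beta)}.
\end{equation}
This is a ratio of real-analytic functions, so it is real-analytic on the open set $\{p>0\}$. That set is the complement of the zero set of a nonzero real-analytic function (a $Q$-function is not identically zero), so it is open, dense and connected.

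\textbf{Counting argument.} Suppose $\alpha$ takes only finitely many values $\alpha_1,\dots,\alpha_k$ on $\{p>0\}$. Since $\alpha$ is continuous and the domain is connected, $\alpha$ must be constant, so $k=1$. Otherwise $\alpha$ is non-constant and continuous on a connected open set, so its image is connected and contains more than one point. Such an image is an uncountable continuum, and in particular infinite. Outcomes with $p(\beta)=0$ occur with probability zero and do not affect the count.

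\textbf{Main obstacle.} The hardest part is rigor about analyticity and connectedness. This includes the convergence of $\mathrm{Tr}[\hat b\,\tilde\rho_B]$ for states that may have unbounded energy, which we would handle by the same entire-series bound using the decay $e^{-|\beta|^2}$. It also requires that the zero set of a nonzero real-analytic function on $\mathbb{R}^2$ does not disconnect the plane. For that we would use the fact that $e^{|\beta|^2}p$ is the restriction of a holomorphic function in two complex variables, whose real zero set has empty interior and measure zero. If that is delicate, an alternative is to argue directly: finitely many values would partition a dense open set into finitely many disjoint relatively closed pieces, which contradicts connectedness.
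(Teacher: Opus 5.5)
Your strategy is different from the paper's and can be made to work. However, the step everything hinges on, the connectedness of $\{p>0\}$, is not justified as written.

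The general principle you invoke is false. It is not true that the complement of the zero set of a nonzero real-analytic function must be connected. For example, $f(x,y)=x^2$ is real-analytic, nonnegative and not identically zero, yet $\{f>0\}$ has two components. Your fallbacks do not close the hole:
\begin{itemize}
\item Empty interior and measure zero do not prevent a set from disconnecting the plane; a line has both properties.
\item The two-variable holomorphic structure alone cannot help. For the trace-class $X=|0\rangle\langle 1|+|1\rangle\langle 0|$ one gets $e^{|\beta|^2}\langle\beta|X|\beta\rangle=2\,\mathrm{Re}\,\beta$, whose zero set is a line.
\item The ``direct'' argument says finitely many values would split $\{p>0\}$ into finitely many disjoint relatively closed pieces. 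That is a contradiction only if you already know $\{p>0\}$ is connected, so it is circular.
\end{itemize}

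The missing idea is the one-complex-variable (Bargmann) structure of the $Q$-function of a \emph{positive} operator. Write $\rho_A=\sum_j\lambda_j|\psi_j\rangle\langle\psi_j|$. Then
$\pi e^{|\beta|^2}p(\beta)=\sum_j\lambda_j|g_j(\bar\beta)|^2$, where $g_j(z)=\sum_n\langle n|\psi_j\rangle z^n/\sqrt{n!}$ is entire and not identically zero. Hence $\{p=0\}\subseteq\{\beta: g_j(\bar\beta)=0\}$ for any $j$ with $\lambda_j>0$, and that set is closed and discrete. The complement in $\mathbb{R}^2$ of a closed discrete (hence countable) set is path-connected. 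With this, your counting argument goes through.

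A smaller point concerns the first moment. Controlling $\mathrm{Tr}[\hat b\,\tilde\rho_B(\beta)]$ through the factor $e^{-|\beta|^2}$ does not work, because that decay comes from mode $A$ and says nothing about the energy in $B$. It is also unnecessary, since you only use continuity of $\alpha$. Use instead
$\alpha(\beta)=\langle 1|\tilde\rho_B(\beta)|0\rangle/\langle 0|\tilde\rho_B(\beta)|0\rangle$. This is a ratio of Fock matrix elements you already showed are real-analytic, and its denominator equals $p(\beta)e^{-|\alpha(\beta)|^2}>0$.

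Once repaired, your route is a clean alternative to the paper's. The paper argues by contradiction in four steps:
\begin{itemize}
\item it expands the conditional state in the $n$ candidate coherent states;
\item it removes the cross terms with a Vandermonde argument;
\item it finds a vector orthogonal to each family $\mathcal{T}_k$;
\item it asserts, without proof, that one of the finitely many $\mathcal{T}_k$ must be a complete set.
\end{itemize}
Both proofs ultimately rest on analyticity of coherent-state overlaps. Yours isolates that analyticity in the discreteness of the $Q$-function's zeros and avoids the unproved completeness claim. It also gives a stronger conclusion: the set of conditional amplitudes is connected, hence uncountable.
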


\begin{proof}
We prove this by contradiction. Suppose for $\rho_{AB}$, the heterodyne detection on $A$ can project $B$ onto a finite set of coherent states
\(\{|\alpha_B^1\rangle,\dots,|\alpha_B^n\rangle\}\) with $\infty>n\ge 2$. For generality assume $\rho_{AB}$ a mixed state and exists a purification $|\Psi_{FAB}\rangle=\sum_{i=1}^n \sqrt{p_i}\,|\varphi_{FA}^i\rangle\otimes|\phi_B^i\rangle$. The heterodyne detection projects $A$ onto a coherent state $|\alpha_A\rangle$. 
We divide the entire phase space of mode $A$ into $n+1$ different non-overlapping sets \(\{\mathcal{R}_1,\dots,\mathcal{R}_n,\mathcal{R}_{n+1}\}\). The first \(n\) sets correspond to the \(n\) different output coherent states of mode \(B\), which means for a heterodyne measurement result \(\alpha_A\), if \(\alpha_A\in\mathcal{R}_i\) then mode \(B\) is projected onto \(|\alpha_B^i\rangle\). We assume that for every point in the first \(n\) sets the probability of obtaining the corresponding heterodyne measurement result is nonzero. The final set \(\mathcal{R}_{n+1}\) contains points that never occur as heterodyne results, meaning \(\operatorname{Tr}_B\bigl[\langle\alpha_A|\rho_{AB}|\alpha_A\rangle\bigr]=0\) whenever \(\alpha_A\in\mathcal{R}_{n+1}\). 
For $k\in\{1,\dots,n\}$ and $\alpha_A\in\mathcal{R}_k$, we can obtain
\begin{equation}\label{eq:decomp}
|\alpha_B^k\rangle\langle\alpha_B^k|
=\sum_{i=1}^n\sum_{j=1}^n C_{ij}\,|\alpha_B^i\rangle\langle\alpha_B^j|,
\end{equation}
where $C_{ij}=\sqrt{p_i p_j}/p(\alpha_A)\cdot \operatorname{Tr}_F\bigl[\langle\alpha_A|\varphi_{FA}^i\rangle\langle\varphi_{FA}^j|\alpha_A\rangle\bigr]$, $p(\alpha_A)>0$ is the probability of obtaining result $|\alpha_A\rangle$.

First, we prove that if $i\neq k$ or $j\neq k$, then $C_{ij}=0$, and the only non-zero term is $C_{kk}=1$. The intuitive understanding of this is that there are infinite equations constraining finite variables $C_{ij}$. Apply the displacement operator $D(-\alpha_B^k)$ to $|\alpha_B^k\rangle\langle\alpha_B^k|$, and isolate the vacuum term,
\begin{equation}\label{eq:13}
(1-C_{kk})|0\rangle\langle 0|
= \sum_{\substack{ij=11\\ ij\neq kk}}^{nn} C_{ij}\,|\alpha_B^i-\alpha_B^k\rangle\langle\alpha_B^j-\alpha_B^k|.
\end{equation}
Then calculate the inner product between the Fock state $|t\rangle$ and the vacuum state in Eq. \eqref{eq:13}, we have
\begin{equation}\label{eq:14}
0=\sum_{\substack{ij=11\\ ij\neq kk}}^{nn} C_{ij}\,
e^{-|\beta_i|^2\big/2}\,e^{-|\beta_j|^2\big/2}\,\frac{\beta_i^t\beta_j^t}{t!},\quad \forall t\ge 1,
\end{equation}
where $\beta_i=\alpha_B^i-\alpha_B^k$ and $ \beta_j=\alpha_B^j-\alpha_B^k$.
If we set $\lambda_{ij}=\beta_i\beta_j^*$ and $d_{ij}=C_{ij}\exp\bigl[-(|\beta_i|^2+|\beta_j|^2)/2\bigr]$, then the first $n^2-1$ equations can be written in the matrix formula $\Lambda \mathbf{D}= 0$, where $\lambda$ is the transport of a Vandermonde-like matrix with $n^2-1$ different non-zero $\lambda_{ij}$, and $\mathbf{D}$ is a vector with $n^2-1$ different $d_{ij}$. Since the Vandermonde matrix has the feature of full rank, so the only solution of \(\Lambda\mathbf{D}=0\) is \(d_{ij}=0\), which implies $\forall ij\neq kk,\ C_{ij}=0$, and $C_{kk}=1$.

Second, for any $\mathcal{R}_k$ ($1\le k\le n+1$), we can define a corresponding set $\mathcal{T}_k$ of coherent states, which is $\mathcal{T}_k=\{|\alpha_A\rangle\;|\;\alpha_A\in \mathcal{R}_k\}$. Then the above discussion leads to the conclusion that, for each $\mathcal{T}_k$, there exists at least one state orthogonal to it. First, at the case $1\le k\le n$, in which $C_{ii}=0$ for any $i=j\neq k$. This means $\forall |\alpha_A\rangle\in\mathcal{T}_k$, $\langle\alpha_A|\rho_A^{ii}|\alpha_A\rangle=0$, where $\rho_A^{ii}=\operatorname{Tr}_F\bigl[|\varphi_{FA}^i\rangle\langle\varphi_{FA}^i|\bigr]$. Then $\rho_A^{ii}$ is orthogonal to the set $\mathcal{T}_k$. Second, for the case $k=n+1$, from its definition we know $\operatorname{Tr}_B\bigl[\langle\alpha_A|\rho_{AB}|\alpha_A\rangle\bigr]=0$ if $|\alpha_A\rangle\in \mathcal{R}_{n+1}$. Then $\rho_A=\operatorname{Tr}_B[\rho_{AB}]$ is orthogonal to $\mathcal{T}_{n+1}$.

However, it can be proved that when $n$ is finite, among all these $n+1$ sets $\{\mathcal{T}_1,\mathcal{T}_2,\dots,\mathcal{T}_{n+1}\}$, at least one of them is a complete or over-complete set of the Fock state space, which means no state can be orthogonal to this set. This is contradictory to the previous conclusion. Therefore, for the case $\infty>n\ge 2$, no such a two-mode entangled state can be found. 
\end{proof}

This lemma can be easily generalized to the $N$-mode entangled state case: for any $N$-mode entangled state $\rho_{\mathbf{A}B}$, if after the heterodyne detections for each mode $\mathbf{A}$, mode $B$ is projected onto a coherent state, then the number of possibly projected coherent states for mode $B$ is either one or infinite.

For our EB scheme, the sub-state $\rho_{\mathbf{C}B_0}^a$ after the POVM measurement on mode $A$ faces the same situation as the above argument. Since which coherent state will be sent to Bob is decided by the results of POVM, then the number of possibly projected coherent states for mode $B_0$ after the heterodyne detection on mode $\mathbb{C}$ is only one. This means the conditioned sub-state $\rho_{CB_0}^{a}$ will be a product state, that is $\rho_{CB_0}^{a}=\rho_{C}^{a}\otimes |\alpha_{B_0}^{a}\rangle\langle\alpha_{B_0}^{a}|$.  

The above conclusion illuminates the  logical structure of our framework: First, POVM measurement should be introduced to prepare finite coherent states. Consequently, entangled state with more than
two modes are necessary to assist security analysis. Furthermore, heterodyne detection performed on those auxiliary modes does not change the sending coherent state, which is determined only by the POVM results. Therefore, , the conditioned sub-state after the POVM measurement should be a product state.


\subsection{The simplified representation of $\rho_C$}\label{simply_rhoC}

The secret key rate calculation is based on the covariance matrix $\gamma_{ACB}$, in which $\gamma_A$, $\gamma_C$, and $\psi_{AC}$ can be theoretically calculated, and $\gamma_B$ is estimated only using Bob's data. Only the estimation of $\psi_{CB}$ will use the measurement results of mode $C$. Generally, the measurement result $c$ should be another set of random numbers in the PM model, which can be simulated through a quantum random number generator (QRNG). However, the product feature of sub-state $\rho_{CB_0}^a$ can help to simplify this.

Let's take the x-quadrature of mode $C$ as an example. The state of modes $C$ and $B_0$ is $\rho_{C B_0} = \sum_{i=1}^n p_i\, \rho_{C}^i \otimes \rho_{B_0}^i$.
Using the product structure of the sub-states, the covariance between $\hat{x}_{C}$ and $\hat{x}_{B_0}$ can be written as
\begin{equation}
\label{eq:covariance_trace}
\begin{aligned}
    &\bigl\langle \Delta \hat{x}_{C}\,\Delta \hat{x}_{B_0}\bigr\rangle \\
&= \sum_{i=1}^n p_i\mathrm{Tr}_{C B_0}\Bigl[ \bigl(\hat{x}_{C}-\bar{x}_{C}\bigr)\bigl(\hat{x}_{B_0}-\bar{x}_{B_0}\bigr)\,\rho_{C}^i\otimes\rho_{B_0}^i \Bigr] \\
&=\sum_{i=1}^n p_i\,
\bigl(\bar{x}_{C}^i-\bar{x}_{C}\bigr)\bigl(\bar{x}_{B_0}^i-\bar{x}_{B_0}\bigr),
\end{aligned}
\end{equation}
where $\bar{x}_{C}^i = \mathrm{Tr}_{C}\bigl(\hat{x}_{C}\,\rho_{C}^i\bigr)$ and $\bar{x}_{C} = \sum_{i=1}^n p_i\,\bar{x}_{C}^i$ can be theoretically calculated. Therefore, if $\rho_{B_0}$ has been simulated by a sequence of random number, then it's enough to let the other sequence be $\{\bar{x}_{C}^i,\ \bar{p}_{C}^i\}$ for the calculation of $\psi_{CB}$. This is much simpler than simulating the heterodyne results $c$ using quantum random numbers.

\begin{figure}
    \centering
    \includegraphics[width= 8.5 cm]{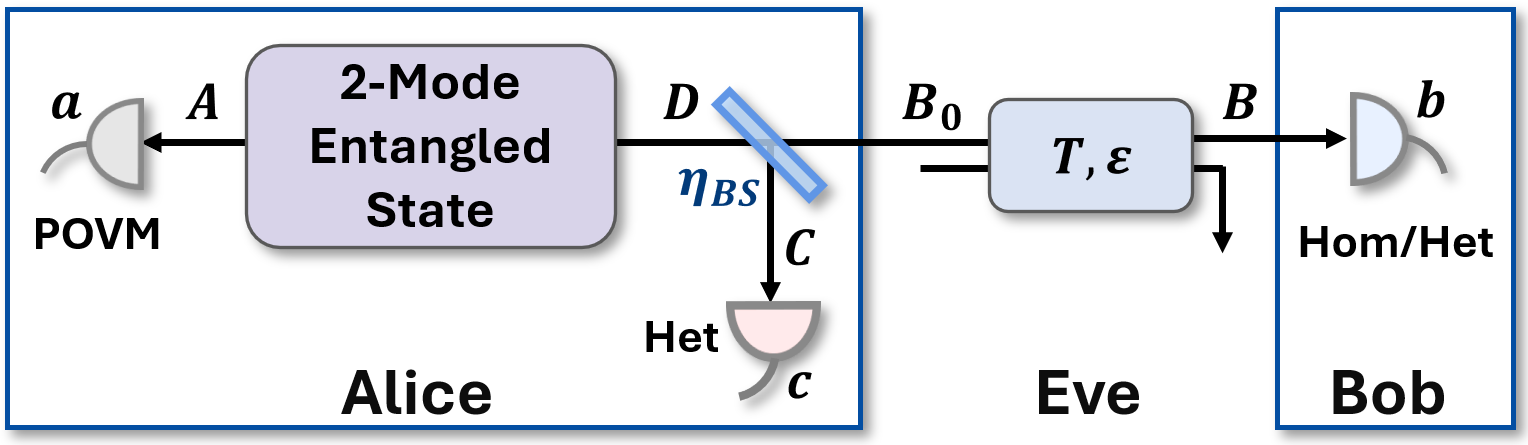}
    \caption{The Entanglement-based (EB) scheme of the proposed DM CV-QKD protocol with a three-mode entangled source. POVM, positive-operator valued measurement; Het, heterodyne measure. Alice prepares a two-mode entangled state $|\Psi_{AD}\rangle$ and performs a POVM measurement on mode $A$, mode $D$ is split by a beam splitter with transmittance $\eta_{BS}$ into modes $C$ and $B_0$. Alice performs heterodyne detection on mode $C$. Mode $B_0$ is transmitted through a quantum channel fully controlled by the eavesdropper Eve with transmittance $T$ and excess noise $\varepsilon$, and heterodyne detected by Bob. Here $a$, $c$ and $b$ are measurement results for $A$, $C$ and $B$ respectively.
    }\label{EB}
\end{figure}

\subsection{Three-mode discrete-modulated source}\label{SourceDesign}

In this subsection, we detail the design principles for an equivalent entanglement source compatible with any discrete-modulation constellation. A three-mode entangled source is adopted as shown in Fig. \ref{EB}, which is not only structurally simple but also highly effective.

\subsubsection{Design of two-mode entangled state}
Our design principle is to let the correlation between modes $C$ and $B_0$ as strong as possible, which from the covariance matrix pointview we want \(\langle\Delta\hat{x}_C\Delta\hat{x}_{B_0}\rangle\) and \(\langle\Delta\hat{p}_C\Delta\hat{p}_{B_0}\rangle\) to be as large as possible.
Suppose the source consists of \(n\) distinct coherent states $S_{\rho}=\bigl\{\,|\alpha_{B_0}^1\rangle,\;|\alpha_{B_0}^2\rangle,\;\dots,\;|\alpha_{B_0}^n\rangle\,\bigr\}$, and the three-mode entangled source is denoted by \(|\Psi_{ACB_0}\rangle\).
Assume the covariance matrix \(\gamma_{ACB_0}\) of \(|\Psi_{ACB_0}\rangle\) is in the standard form (so that \(\langle\Delta\hat{x}_i\Delta\hat{p}_j\rangle=0\) for all \(i,j\)), which can be arranged in the QAM case.

We take the \(x\)-quadrature as an example. Denote the mean and variance each substate \(\rho_C^i\) and of the coherent state \(|\alpha_{B_0}^i\rangle\) by
\begin{equation}\label{eq:15}
\begin{aligned}
\bar{x}_C^i &= \operatorname{Tr}\bigl[\hat{x}_C\,\rho_C^i\bigr], \quad
V_{C,x}^i = \operatorname{Tr}\bigl[\hat{x}_C^2\,\rho_C^i\bigr] - (\bar{x}_C^i)^2,\\
\bar{x}_{B_0}^i &= \langle\alpha_{B_0}^i|\hat{x}_{B_0}|\alpha_{B_0}^i\rangle, \quad
V_{B_0,x}^i = 1.
\end{aligned}
\end{equation}
The overall mean values for modes \(C\) and \(B_0\) are $\bar{x}_C=\sum_{i=1}^n p_i\,\bar{x}_C^i$ and $
\bar{x}_{B_0}=\sum_{i=1}^n p_i\,\bar{x}_{B_0}^i$, 
and the corresponding variances are
\begin{equation}\label{eq:16}
\begin{aligned}
V_{C,x} &= \sum_{i=1}^n p_i\,V_{C,x}^i \;+\; \sum_{i=1}^n p_i\bigl(\bar{x}_C^i-\bar{x}_C\bigr)^2,\\
V_{B_0,x} &= 1 \;+\; \sum_{i=1}^n p_i\bigl(\bar{x}_{B_0}^i-\bar{x}_{B_0}\bigr)^2.
\end{aligned}
\end{equation}
Given that the sub-state $\rho_{CB_0}^a$ is a product state, the covariance between \(C\) and \(B_0\) in the \(x\)-quadrature is
\begin{equation}\label{eq:17}
\begin{aligned}
&\bigl|\langle\Delta\hat{x}_C\Delta\hat{x}_{B_0}\rangle\bigr|= \bigl|\langle(\hat{x}_C-\bar{x}_C)(\hat{x}_{B_0}-\bar{x}_{B_0})\rangle\bigr|\\
&= \Biggl|\sum_{i=1}^n p_i\,
\operatorname{Tr}_{CB_0}\bigl[(\hat{x}_C-\bar{x}_C)(\hat{x}_{B_0}-\bar{x}_{B_0})
\,\rho_C^i\otimes\rho_{B_0}^i\bigr]\Biggr|\\
&= \Biggl|\sum_{i=1}^n p_i\bigl(\bar{x}_C^i-\bar{x}_C\bigr)\bigl(\bar{x}_{B_0}^i-\bar{x}_{B_0}\bigr)\Biggr|\\
&\le \sqrt{\Bigl(\sum_{i=1}^n p_i\bigl(\bar{x}_C^i-\bar{x}_C\bigr)^2\Bigr)
\Bigl(\sum_{i=1}^n p_i\bigl(\bar{x}_{B_0}^i-\bar{x}_{B_0}\bigr)^2\Bigr)}\\
&= \sqrt{\bigl(V_{C,x}-\sum_{i=1}^n p_i V_{i,Cx}\bigr)\bigl(V_{B_0,x}-1\bigr)}.
\end{aligned}
\end{equation}
The inequality above follows from the Cauchy-Schwarz inequality. Equality holds if and only if there exists a nonzero scalar \(t\) such that for all \(i\), $(\bar{x}_C^i-\bar{x}_C)/(\bar{x}_{B_0}^i-\bar{x}_{B_0}) \equiv t$.
By the uncertainty relation each sub-state satisfies \(V_{C,x}^i V_{C,p}^i\ge 1\). If we further assume \(x\) and \(p\) are symmetric for each substate (so \(V_{C,x}^i\ge 1\) and \(\sum_i p_i V_{C,x}^i\ge 1\)), then maximizing \(|\langle\Delta\hat{x}_C\Delta\hat{x}_{B_0}\rangle|\) requires \(V_{C,x}^i=1\) for all \(i\). These conditions are simultaneously met when each \(\rho_C^i\) is itself a coherent state \(|\alpha_C^i\rangle\) and the means satisfy a linear relation $\alpha_C^i/\alpha_{B_0}^i = t,\ \forall i\in\{1,\dots,n\}$.

Such a linear relationship can be realized by a beams-plitter model: prepare a coherent state \(|\beta_D^i\rangle\) with $\beta_D^i=\sqrt{1+t^2}\,\alpha_{B_0}^i$, and send it through a beams-plitter of transmittance \(\eta_{BS}=1/(1+t^2)\). In this construction Alice only needs to prepare a two‑mode entangled state of the form $|\psi_{AD}\rangle=\sum_{i=1}^n \sqrt{p_i}\,|R_A^i\rangle\,|\beta_D^i\rangle$, and then let mode \(D\) pass through the beam-splitter to produce the desired correlations between modes \(C\) and \(B_0\), as illustrated in Fig. \ref{EB}.

For the explicit design of \(|\psi_{AD}\rangle\), first symmetrize the mixed state
\(\rho_D=\sum_{i=1}^n p_i\,|\beta_D^i\rangle\langle\beta_D^i|\) in an orthonormal basis \(\{|\theta_D^i\rangle\}\):
\begin{equation}\label{eq:18}
\rho_D=\sum_{i=1}^n p_i\,|\beta_D^i\rangle\langle\beta_D^i|
=\sum_{i=1}^n v_i\,|\theta_D^i\rangle\langle\theta_D^i|.
\end{equation}
Write the orthonormal vectors as $|\theta_D^i\rangle=\sum_{j=0}^{\infty} c_{ij}\,|j\rangle$, and define
\begin{equation}\label{eq:19}
|\psi_{AD}\rangle=\sum_{i=1}^n \sqrt{v_i}\,|\varphi_A^i\rangle\,|\theta_D^i\rangle,
\end{equation}
where the states \(|\varphi_A^i\rangle\) is designed as $|\varphi_A^i\rangle=\sum_{j=0}^{\infty} (c_{ij})^*\,|j\rangle$.


\begin{figure*}[ht]
\centering
\includegraphics[width=14cm]{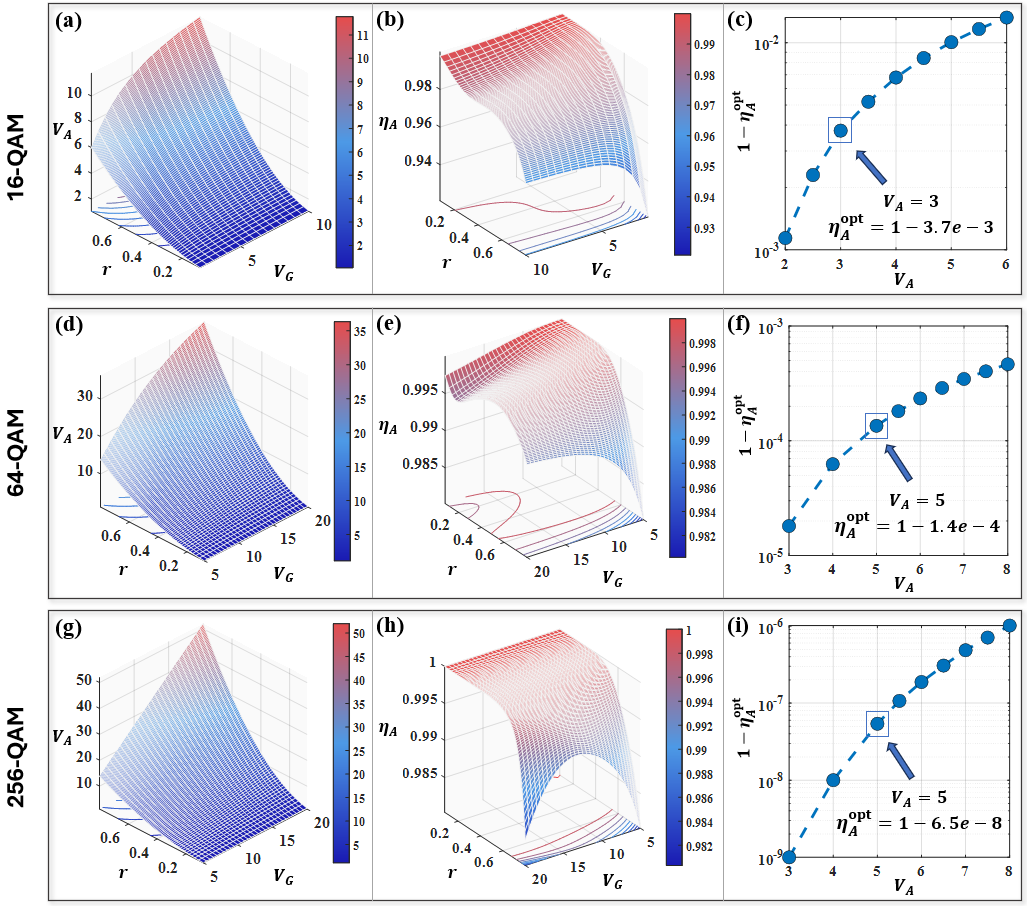}
\caption{Performance of the entangled source with 16-, 64-, 256-QAM constellation using Gaussian probability constellation shaping.}
\label{source_all}
\end{figure*}

\subsubsection{Optimization for quadrature-amplitude modulation cases}
QAM format is a standard modulation format in classical coherent communications and has been widely applied. Systems employing this modulation are naturally compatible with the existing electro-optical device.

For a standard $n$-QAM ($n=L^2$, with $L$ a positive integer) format, coherent states are placed at the intersections of \(L\) equally spaced columns and \(L\) equally spaced rows in phase space (or classically called constellation map). If the space between adjacent columns (or rows) is \(2r\), then the positions of the \(n\) coherent states can be written as
\begin{equation} 
    \{\forall \mu, \nu \in [1,L], \alpha_{\mu \nu}=(2\mu-1-L)r+i \cdot (2\nu-1-L)r\}. 
\end{equation}
For this standard QAM constellation, the covariance matrix of \(|\Psi_{AD}\rangle\) exhibits the following form,
\begin{equation} 
    \gamma_{AD}=
    \begin{bmatrix} 
            V_AI_2 & \phi_{AD}\sigma_Z \\ 
    \phi_{AD}\sigma_Z & V_DI_2 
    \end{bmatrix}, 
\end{equation}
where $I_2=
\begin{bmatrix}
		1 & 0 \\
		0 & 1
\end{bmatrix}$, $\sigma_Z=
\begin{bmatrix}
		1 & 0 \\
		0 & -1
\end{bmatrix}$. According to the uncertainty principle, we know $\phi_{AD} \le \sqrt{V_A^2-1}$, then an ideality coefficient $\eta_A=\phi_{AD}^2\Big/(V_A^2-1)$ can be defined to quantify the deviation of $|\Psi_{AD}\rangle$ from a two-mode squeezed state (TMSV). This deviation depends on three factors: the number of coherent states $n$, their positions in phase space $\alpha_{\mu \nu}$, and their probability distribution $p(\alpha_{\mu \nu})$.

Using Gaussian probability constellation shaping to optimize the distribution of these states, yields that
\begin{equation} 
    p(\alpha_{\mu \nu})=e^{\frac{-\vert\alpha_{\mu \nu}\vert^2}{(2V_G)}}\Big/\sum{e^{\frac{-|\alpha_{\mu \nu}|^2}{(2V_G)}}},
\end{equation}
where $V_G$ is a parameter related to probability shaping. 

The above processes specify the quantum state $|\Psi_{AD}\rangle$ to be transmitted.
We numerically calculate the \(\eta_A\) and $V_A$ for 16-QAM, 64-QAM and 256-QAM with different \(V_G\) and \(r\), to find a relatively optimal combination of \(V_A\) and \(\eta_A\), as shown in  Fig. \ref{source_all}. 
We first use a heatmap to illustrate the relationship between $V_A$ and \(V_G\) and \(r\) in Fig. \ref{source_all} (a) (d) and (g). Then we calculate the corresponding $\eta_A$ for given values of $V_G$ and $r$ in Fig. \ref{source_all} (b) (e) and (h), which allows us to determine the optimal $\eta_A^{\mathrm{opt}}$ for any given $V_A$, and we display the relationship of $V_A$ and $1-\eta_A^{\mathrm{opt}}$ in Fig. \ref{source_all} (b) (e) and (h). 
It can be seen that the more coherent states used and the smaller the modulation variance is set, the closer $|\Psi_{AD}\rangle$ approaches the ideal TMSV state. Considering both the performance degradation caused by the nonideal discrete-modulated source and the signal-to-noise ratio requirements of practical experiments, the typical $(V_G, V_A, \eta_A^{\mathrm{opt}})$ for three modulation formats can be selected as shown in Tab. \ref{tab1}.

\begin{table}[ht]
\centering
\begin{ruledtabular} 
\begin{tabular}{@{} l c c c @{}}
\textbf{Constellation} & \(\mathbf{V_G}\) & \(\mathbf{V_A}\) & \(\boldsymbol{\eta_A^{\mathrm{opt}}}\) \\
\hline
16-QAM & 3  & 3 & \(1-3.7\times10^{-3}\) \\
64-QAM & 6  & 5 & \(1-1.4\times10^{-4}\) \\
256-QAM & 11 & 5 & \(1-6.5\times10^{-8}\) \\
\end{tabular}
\caption{Typical $V_A$ and corresponding optimal $\eta_A^{\mathrm{opt}}$.}
\label{tab1}
\end{ruledtabular} 
\end{table}

It is worth mentioning that the quantum state sent into the channel is determined by the design of $|\psi_{AD}\rangle$, which implies that the protocol can be be customized following a specific system — one may first identify the quantum state required in practice and then optimize the corresponding two-mode entangled source. Taking the two-mode squeezed vacuum (TMSV) as an ideal reference, the gap between \(|\Psi\rangle_{AD}\) and TMSV constitutes the primary factor influencing protocol performance. Therefore, from a practical engineering perspective, we concentrate on the design of standard quadrature-amplitude-modulation (QAM) modulation source used in classical coherent communications, since systems employing this modulation are particularly well aligned with the existing electro-optic devices.

\subsection{Secret key rate calculation based on covariance matrix}\label{SKR}
Here we explain the derivation of the secret key rate formula for our three-mode protocol in Fig. \ref{EB}. It should be noted that the method is not limited to the three-mode case and can be generalized to the cases where Alice has an arbitrary $N\ (\ge 2)$ mode entangled source.

A generally used secret key rate for the asymptotic case is the Devetak--Winter formula \cite{devetak2005distillation},
\begin{equation}\label{eq:DW}\tag{3}
K=\beta I(a:b)-S(b:E),
\end{equation}
where \(I(a:b)\) is the classical mutual information between Alice and Bob, \(\beta\) is the classical reconciliation efficiency, and \(S(b:E)\) is the Holevo information \cite{holevo1973bounds} between Bob's data and the adversary Eve. Usually, \(S(b:E)\) can be replaced by any of its upper bounds, among which the upper bound \(S_{BE}^G\) derived from the Gaussian state extremality theorem \cite{garcia2006unconditional, wolf2006extremality}  is the most commonly used case, because its calculation only relies on the covariance matrix \(\gamma_{ACB}\), which can be estimated through the experimental data.

The covariance matrix \(\gamma\) of an \(N\)-mode state \(\hat{\rho}_N\) is defined as \cite{weedbrook2012gaussian}
\begin{equation}\label{eq:cov_def}\tag{4}
\gamma_{ij}:=\tfrac{1}{2}\big\langle\{\Delta\hat{r}_i,\Delta\hat{r}_j\}\big\rangle,
\end{equation}
where \(\hat{r}=\{\hat{x}_1,\hat{p}_1,\ldots,\hat{x}_N,\hat{p}_N\}\), \(\langle\hat{r}_i\rangle=\mathrm{Tr}(\hat{r}_i\hat{\rho}_N)\), and \(\Delta\hat{r}_i=\hat{r}_i-\langle\hat{r}_i\rangle\). 
Suppose \(\gamma_{ACB}\) is the covariance matrix of the state \(\rho_{ACB}\), which is the state after mode \(B_0\) of the entangled source \(\ket{\Psi_{ACB_0}}\) arriving at Bob's side through the channel. It can be represented using several sub-matrices,
\begin{equation}\label{eq:gamma_ACB}\tag{5}
\gamma_{ACB}=
\begin{pmatrix}
\gamma_A & \psi_{AC} & \kappa_{AB} \\
\psi_{AC}^T & \gamma_C & \psi_{CB} \\
\kappa_{AB}^T & \psi_{CB}^T & \gamma_B
\end{pmatrix}.
\end{equation}
Here \(\gamma_A,\gamma_C\) and \(\psi_{AC}\) can be theoretically calculated from \(\Psi_{ACB_0}\), since modes \(A\) and \(C\) are kept on Alice's side. \(\psi_{CB}\) and \(\gamma_B\) can be estimated after Alice and Bob randomly share part of their coherent measurement results. The only unknown sub-matrix is \(\kappa_{AB}\), since the measurements for modes \(A\) are not coherent measurements now.

Nevertheless, the uncertainty principle \cite{weedbrook2012gaussian} imposes a constraint on  the covariance matrix \(\gamma_N\) for an \(N\)-mode state, which is
\begin{equation}\label{eq:uncertainty}\tag{6}
\gamma_N + i\Omega_N \ge 0,
\end{equation}
where \(\Omega_N=\mathrm{diag}(\omega_1,\omega_2,\ldots,\omega_N)\), and
\begin{equation}\label{eq:omega}\tag{7}
\omega_1=\cdots=\omega_N=\omega=
\begin{pmatrix}
0 & 1\\
-1 & 0
\end{pmatrix}.
\end{equation}

We denote \(\mathcal{F}_{\kappa_{AB}}\) as the set of all \(\kappa_{AB}\) satisfying this constraint for \(\gamma_{ACB}\), that is
\begin{equation}\label{eq:S_k}\tag{8}
\mathcal{F}_{\kappa_{AB}}=\big\{\phi_{AB}\ \big|\ \gamma_{ACB}[\kappa_{AB}=\phi_{AB}]+i\Omega_3\ge 0\big\}.
\end{equation}
If \(\phi_{AB}^{\mathrm{worst}}\) is the real eavesdropping induced \(\kappa_{AB}\), then \(\phi_{AB}^{\mathrm{worst}}\in \mathcal{F}_{\kappa_{AB}}\). It can be understood that \(S_{BE}^G\) is a function of \(\kappa_{AB}\) now. Then by traversing the set \(\mathcal{F}_{\kappa_{AB}}\) for all possible \(\kappa_{AB}\), we can find the maximum of \(S_{BE}^G\). The secret key rate can be written as
\begin{equation}\label{eq:key_rate_sup}\tag{9}
K=\beta I(a:b)-\sup_{\kappa_{AB}\in \mathcal{F}_{\kappa_{AB}}} S_{BE}^G.
\end{equation}

Now we introduce the calculation method for each term. The reconciliation efficiency $\beta$ is determined by error correction codes and the signal-to-noise ratio of the measurement results. For \(\beta I(a:b)\), it can be expressed as \(\beta I(a:b)=H(b)-\mathrm{leak}_{\mathrm{EC}}\), in which \(H(b)\) is the Shannon entropy of Bob's measurement results, and \(\mathrm{leak}_{\mathrm{EC}}\) represents the information that Bob sends to Alice for the data reconciliation. Both these terms can be obtained from measured data and the error correction step. Given the channel model, the probability \(p(b\mid i)\) can be obtained from the model, which is the probability of getting Bob's measurement result \(b\) given Alice sending the state \(\rho_B^i\). The overall probability is \(p(b)=\sum_{i=1}^n p_i\,p(b\mid i)\). Since \(b\) is a discrete variable quantized from the measurement result, then
\begin{equation}\label{eq:mutual_discrete}\tag{10}
\begin{aligned}
&I(a:b)\\
&=-\sum_{b} p(b)\log_2 p(b)+\sum_{i=1}^n p_i \sum_{b} p(b\mid i)\log_2 p(b\mid i).
\end{aligned}
\end{equation}

For \(\sup_{\kappa_{AB}\in \mathcal{F}_{\kappa_{AB}}} S_{BE}^G(\kappa_{AB})\), we traverse each \(\mathcal{F}_{\kappa_{AB}}\) to calculate its corresponding \(S_{BE}^G(\kappa_{AB})\) and find the maximal value. \(S_{BE}^G(\kappa_{AB})\) be expressed as $S_{BE}^G(\kappa_{AB})= S\bigl(\rho_{A B C}^G\bigr)-S\bigl(\rho_{A C\mid b}^G\bigr)$,
where \(\rho_{A C B}^G\) is the Gaussian state with the same covariance matrix as \(\gamma_{ACB}(\kappa_{AB})\), and \(\rho_{A C\mid b}^G\) denotes the conditional Gaussian state (with covariance \(\gamma_{AC\mid b}\)) related to Bob's measurement method. The procedures to obtain \(\gamma_{AC\mid b}\) from \(\gamma_{ACB}\) and to evaluate von Neumann entropies for Gaussian states can be found in \cite{weedbrook2012gaussian, fossier2009improvement}.

\begin{figure}
    \includegraphics[width= 8.5 cm]{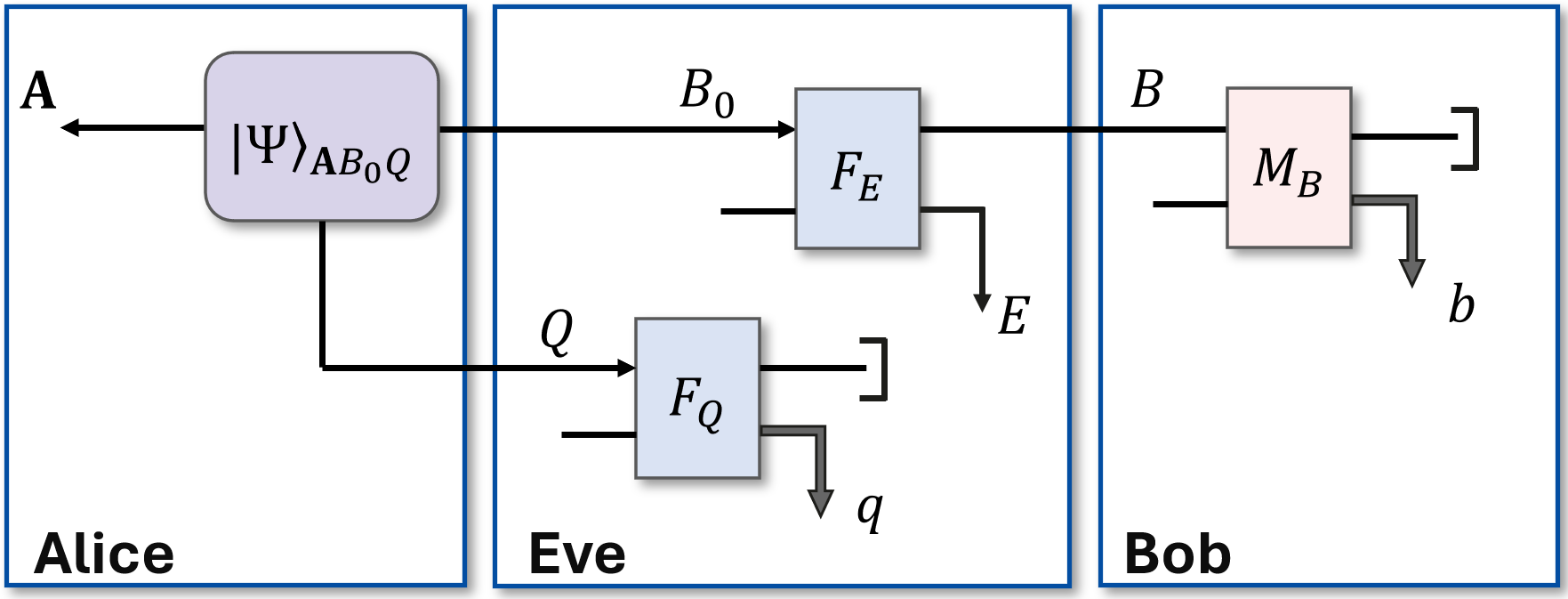}
    \caption{Mixture and measurement of two entangled sources.
    }\label{Mix}
\end{figure}

\subsection{Convexity of secret key rate}\label{Convexity}
Here we describe the convexity of SKR as the following lemma.
\begin{lemma}
Let $\rho_{\mathbf{A}B}^{(+)}$ and $\rho_{\mathbf{A}B}^{(-)}$ be two $N$-mode quantum states shared by Alice and Bob, where $\mathbf{A}=A_1A_2\cdots A_{N-1}$ denotes the modes held by Alice and $B$ denotes Bob's mode. Assume that the two states yield the same SKR,
\begin{equation}
K\left(\rho_{\mathbf{A}B}^{(+)}\right)=
K\left(\rho_{\mathbf{A}B}^{(-)}\right)
=K_0.
\end{equation}
For any probabilistic mixture
\begin{equation}
\rho_{\mathbf{A}B}=
p_1\rho_{\mathbf{A}B}^{(+)}+
p_2\rho_{\mathbf{A}B}^{(-)},
\quad
p_1+p_2=1,
\end{equation}
the SKR of the mixed state satisfies
\begin{equation}
K(\rho_{\mathbf{A}B})
\le p_1K\left(\rho_{\mathbf{A}B}^{(+)}\right)+
p_2K\left(\rho_{\mathbf{A}B}^{(-)}\right)=
K_0.
\end{equation}
It can be expressed as
\begin{equation} 
    K\big(\sum_{i=1}^2 p_i \rho_i\big) \le \sum_{i=1}^2 p_iK(\rho_i)=K(\rho_i),
\end{equation} 
where $\sum_i p_i=1$, $\rho_1=\rho_{\mathbf{A}B}^{(+)}$, $\rho_2=\rho_{\mathbf{A}B}^{(-)}$.
\end{lemma}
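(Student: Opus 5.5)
The plan is to model the mixture by a classical flag register and show that discarding the flag can only help Eve. We introduce a classical register $G$ and the extended state
\begin{equation*}
\rho_{G\mathbf{A}B}=p_1|0\rangle\langle0|_G\otimes\rho_{\mathbf{A}B}^{(+)}+p_2|1\rangle\langle1|_G\otimes\rho_{\mathbf{A}B}^{(-)}.
\end{equation*}
Tracing out $G$ returns $\rho_{\mathbf{A}B}$. This is the scenario of Fig.~\ref{Mix}, in which one of two sources is selected at random.

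First we fix a purification of each branch, $|\Psi^{(\pm)}\rangle_{\mathbf{A}BE}$. A purification of the mixture is then
\begin{equation*}
|\Psi\rangle=\sqrt{p_1}\,|0\rangle_{G'}|\Psi^{(+)}\rangle+\sqrt{p_2}\,|1\rangle_{G'}|\Psi^{(-)}\rangle,
\end{equation*}
where the register $G'$ is given to Eve. Because all purifications are related by isometries on the purifying system, we may assume Eve holds $EG'$. This is the most pessimistic choice, since the flag is then known to her.

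Second we compare the two terms of $K=\beta I(a:b)-S(b:E)$ separately. For the Holevo term we apply the chain rule to the classical-quantum state on $b$, $E$ and $G'$. Since $G'$ is classical and correlated with the branch,
\begin{equation*}
S(b:EG')=S(b:G')+\sum_i p_i\,S(b:E)_{\rho_i}\ \ge\ \sum_i p_i\,S(b:E)_{\rho_i}.
\end{equation*}
The security bound for the mixed state must use Eve's full system $EG'$, so Eve's information is at least the average of the branch values. For the mutual information term, Alice's and Bob's measurements act identically on both branches, so $p(a,b)$ is the mixture $\sum_i p_i p_i(a,b)$. We then use $I(a:b)\le I(a:b\,|\,G)=\sum_i p_i I_i(a:b)$. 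This holds because $G$ is independent of the input: the choice of branch does not affect the marginal of $a$ when $a$ is fixed by the source design. More generally it follows from $I(a:b)\le I(a:bG)=I(a:G)+I(a:b|G)$ together with $I(a:G)=0$, which we would verify for the measured marginals; if this independence fails, the argument instead needs concavity of $I$ in the input distribution. We do not need $\beta$ to change, since $\beta I$ is fixed by the reconciliation procedure on the observed data and is bounded by the branch average in the same way.

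Combining the two bounds gives $K(\rho)\le\sum_i p_iK(\rho_i)$. The equal-rate assumption then turns the right-hand side into $K_0$.

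The main obstacle is the mutual-information term. $I(a:b)$ is not convex in general; it is convex in the channel for a fixed input, which is exactly our situation, because the branches share Alice's input distribution and differ only in the effective channel and correlations. We would justify this via convexity of mutual information in $p(b|a)$ for a fixed $p(a)$. A subtler point is ensuring that the $S_{BE}^G$ bound, evaluated on covariance matrices, respects the same chain rule. Here we rely on the Gaussian optimality theorem: the bound for the mixture is computed from $\gamma=\sum_i p_i\gamma_i$, and its Gaussian state is dominated as above.
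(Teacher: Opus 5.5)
Your treatment of Eve's term follows the paper's proof. There too the mixture is purified by a flag register $Q$, the flag reaches Eve, and the chain rule gives $S(b:Eq)=I(b:q)+\sum_q p(q)\,S(b:E|q)$, which is at least the common branch value. The argument is then closed by data processing and Gaussian optimality.

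One repair is needed in that part. In your purification $G'$ is a coherent register, so the equality $S(b:EG')=S(b:G')+\sum_i p_i S(b:E)_{\rho_i}$ is not valid as written. Let Eve measure $G'$ in the $\{|0\rangle,|1\rangle\}$ basis. This gives her a classical copy of the flag $G$ together with the branch states $\rho^{(\pm)}_{bE}$. Then use $S(b:EG')\ge S(b:EG)=I(b:G)+\sum_i p_iS(b:E)_{\rho_i}$. This is exactly the paper's measurement of $Q$ followed by the data-processing inequality.

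The genuine gap is the mutual-information step. The lemma is stated for arbitrary mixtures, so $I(a:G)=0$ is an additional hypothesis. Your fallback for the case where it fails goes the wrong way. If the branches share the channel but differ in Alice's input distribution, concavity of $I$ in the input gives $I(a:b)\ge\sum_i p_i I_i(a:b)$, the opposite of what you need. More generally, $I$ is not convex in the joint distribution: the branches $a=b=0$ and $a=b=1$ each have $I_i=0$, while their equal mixture carries one bit.

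The remedy is to keep the flag term you already found on Eve's side instead of dropping it. From $I(a:b)\le I(aG:b)=I(b:G)+I(a:b|G)$ you get
\[
K(\rho_{\mathbf{A}B})\le\sum_i p_iK(\rho_i)-(1-\beta)\,I(b:G)\le\sum_i p_iK(\rho_i)
\]
for any common $\beta\le 1$, with no independence assumption.

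For comparison, the paper does not argue this term at all; it only asserts that $I(a:b)$ is unaffected by the mixing. Handling it explicitly is therefore an improvement. Also, in the paper's uses of the lemma (Steps 1 and 3 of the symmetrization), Alice's reduced state is the same in both branches, so your condition $I(a:G)=0$ does hold there.
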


\begin{proof}
We first explain the steps for mixture of the two states, as shown in Fig. \ref{Mix}:

1). Suppose there are two entanglement sources \(|\Psi\rangle_{\mathbf{A}B_0}^{(+)}\) and \(|\Psi\rangle_{\mathbf{A}B_0}^{(-)}\). After the optimal Gaussian attack by the eavesdropper, they yield the target states $\rho_{\mathbf{A}B}^{(+)}$ and $\rho_{\mathbf{A}B}^{(-)}$. 
Then mix the two entanglement sources and get \(\rho_{\mathbf{A}B_0}=p\rho_{\mathbf{A}B_0}^{(+)}+(1-p)\rho_{\mathbf{A}B_0}^{(-)}\).

2). Introduce an auxiliary system \(Q\), which can purify the entire system and yield 
\begin{equation}
    |\Psi\rangle_{\mathbf{A}B_0Q}=\frac{1}{\sqrt{2}}\bigl(|\Psi\rangle_{\mathbf{A}B_0}^{(+)}|0\rangle_Q+|\Psi\rangle_{\mathbf{A}B_0}^{(-)}|1\rangle_Q\bigr),
\end{equation}
and this is a purification of the mixed state $\rho_{\mathbf{A}B_0}$. Then a POVM measurement $F_Q$ is performed on \(Q\), obtaining the result $q$, with the result $q=|0\rangle_Q$ projecting the subsystem \(ACB_0\) onto \(|\Psi\rangle_{\mathbf{A}B_0}^{(+)}\), and $q=|1\rangle_Q$ projecting it onto $|\Psi\rangle_{\mathbf{A}B_0}^{(-)}$.

3). The mixed state $\rho_{B_0}$ is sent to Bob  through the quantum channel. Eve implement an attack described by a unitary $F_E$ and can purify the system $\rho_{\mathbf{A}B Q}$, so that the entire state after the channel is
\begin{equation}
|\Psi\rangle_{\mathbf{A}B E Q}=U_{EB}\bigl(|\Psi\rangle_{\mathbf{A}B_0 Q}\otimes|\Psi\rangle_{E}\bigr),
\end{equation}
and we can get the mixed state $\rho_{\mathbf{A}B}=p\rho_{\mathbf{A}B}^{(+)}+(1-p)\rho_{\mathbf{A}B}^{(-)}$.

4). Bob receive the state $\rho_B$ and perform a heterodyne measurement, obtaining the result $b$.

5). The measurement $q$ is published, allowing Alice and Bob to align which state they have sent and measured, and then perform post-processing.

Now we analyze the maximum information Eve can obtain. Suppose Eve can fully obtain the result $q$ and get the joint system $Eq$, then Eve's accessible information should be $S(b:Eq)$. From the definition of von Neumann mutual information, we have,
\begin{equation}\label{eq:mutual_def}
    \begin{aligned}
        S(b:E q)=&S(b)+S(E q)-S(b q E) \\
        =&S(\rho_b)+S(\rho_{E q})-S(\rho_{b q E}).
    \end{aligned}
\end{equation}
The calculation of \(S(b:E q)\) is based on the following classical-quantum state
\begin{equation}\label{eq:rho_bqE}
\rho_{b q E}=\iint \mathrm{d}b\,\mathrm{d}q\; p(b,q)\,|b\rangle\langle b|\otimes|q\rangle\langle q|\otimes\rho_E^{b q}.
\end{equation}
From this we can write
\begin{equation}\label{eq:rho_b}
    \begin{aligned}
        \rho_b=&\int \mathrm{d}b\; p(b)\,|b\rangle\langle b|,\\
\rho_{E q}=&\int \mathrm{d}q\; p(q)\,|q\rangle\langle q|\otimes\rho_E^{q},\\
\rho_{E q}^{b}=&\int \mathrm{d}q\; p(q\mid b)\,|q\rangle\langle q|\otimes\rho_E^{b q}.
    \end{aligned}
\end{equation}
Using von Neumann entropy and joint entropy definitions, we have
\begin{equation}\label{eq:S_b}
    \begin{aligned}
        S(b)=&S\bigl(\int \mathrm{d}b\; p(b)\,|b\rangle\langle b|\bigr)=H\bigl(p(b)\bigr),\\
S(E q)=&H\bigl(p(q)\bigr)+\int \mathrm{d}q\; p(q)\,S\bigl(\rho_E^{q}\bigr),\\
S(b q E)=&H\bigl(p(b)\bigr)+\int \mathrm{d}b\; p(b)\,S\bigl(\rho_{E q}^{b}\bigr).
    \end{aligned}
\end{equation}
The conditional entropy can be written as
\begin{equation}\label{eq:cond_entropy}
    \begin{aligned}
        \int \mathrm{d}b\; p(b)\,S\bigl(\rho_{E q}^{b}\bigr)=&\int \mathrm{d}b\; p(b)\,H\bigl(p(q\mid b)\bigr)+\\
&\int \mathrm{d}b\; p(b)\int \mathrm{d}q\; p(q\mid b)\,S\bigl(\rho_E^{b q}\bigr).
    \end{aligned}
\end{equation}
Therefore, \(S(b:E q)\) can be written as
\begin{equation}\label{eq:S_bEq_expand}
\begin{aligned}
S(b:E q)=&H\bigl(p(q)\bigr)+\int\mathrm{d}q\; p(q)\,S(\rho_E^{q})\\
&-\int\mathrm{d}b\; p(b)\,S(\rho_{E q}^{b}).
\end{aligned}
\end{equation}
SInce \(A B C E Q\) is a pure system, for rank-1 measurements we have$S(\rho_E^{q})=S(\rho_{\mathbf{A}B}^{q})$ and $S(\rho_E^{b q})=S(\rho_{\mathbf{A}}^{b q})$. Moreover, since $H(p(q))=H(q)$, $H(q\mid b)=\int \mathrm{d}b\; p(b)\,H\bigl(p(q\mid b)\bigr)$, $I(b:q)=H(q)-H(q\mid b)$. Using these relations, we obtain a simplified form of Eve's knowledge
\begin{equation}\label{eq:S_bEq_simplified}
\begin{aligned}
S(b:E q)
&=I(b:q)+\int\mathrm{d}q\; p(q)\Bigl[S(\rho_{\mathbf{A}B}^{q})\\
&-\int\mathrm{d}b\; p(b\mid q)\,S(\rho_{\mathbf{A}}^{b q})\Bigr].
\end{aligned}
\end{equation}
The bracketed term has the same form as a classical--quantum mutual information, it can be written as
\begin{equation}\label{eq:cond_mutual}
S(E:b\mid q)= S(\rho_{\mathbf{A}B}^{q})-\int\mathrm{d}b\; p(b\mid q)\,S(\rho_{\mathbf{A}}^{b q}).
\end{equation}
When the measurement result $q$ is \(q=|0\rangle_Q\) we have
\begin{equation}\label{eq:cond0}
S(b:E\mid q=|0\rangle_Q)=S(b:E)\bigl(\rho_{\mathbf{A}B}^{(+)}\bigr),
\end{equation}
and when \(q=|1\rangle_Q\) we have
\begin{equation}\label{eq:cond1}
S(b:E\mid q=|1\rangle_Q)=S(b:E)\bigl(\rho_{\mathbf{A}B}^{(-)}\bigr).
\end{equation}
Since $S(E:b)\bigl(\rho_{\mathbf{A}B}^{+}\bigr)=S(E:b)\bigl(\rho_{\mathbf{A}B}^{-}\bigr)=S(E:b)_1$, and $S(E:b)_1$ is defined as Eve's information obtained from either independent quantum state. Then $S(b:E q)$ can be further simplified as
\begin{equation}\label{eq:S_bEq_final}
\begin{aligned}
S(b:E q)
&=I(b:q)+pS(E:b)\bigl(\rho_{\mathbf{A}B}^{((+))}\bigr)+\\
&(1-p)S(E:b)\bigl(\rho_{\mathbf{A}B}^{((-))}\bigr)\\
&=I(b:q)+S(E:b)\bigl(\rho_{\mathbf{A}B}^{(+)}\bigr)\ge S(E:b)_1.
\end{aligned}
\end{equation}
Thus, the convexity of Eve's information is proven.

Our goal is to using the covariance matrix $\gamma_{\mathbf{A}B}$ of the mixed state $\rho_{\mathbf{A}B}$ to calculate the mutual information \(S(b:E Q)_{\mathrm{G}}\) when the system $\mathbf{A}BQE$ is Gaussian, and with the quantum data-processing inequality \cite{schumacher1996quantum} and Gaussian optimality \cite{garcia2006unconditional, wolf2006extremality}, we have
\begin{equation}\label{eq:QDPI}
S(b:E Q)_{\mathrm{G}} \ge S(b:E Q)\ge S(b:E q)\ge S(E:b)_1.
\end{equation}

Additionally, the mutual information between Alice and Bob is a classical result that is not affected by quantum operations. Therefore, the relationship between SKR is 
\begin{equation}\label{eq:QDPI}
K(\gamma_{\mathbf{A}B})=K(\rho_{\mathbf{A}B}^G) \le K(\rho_{\mathbf{A}B})\le K(\rho_{\mathbf{A}B}^{(+)}).
\end{equation}
Finally, the convexity of SKR is proved.
\end{proof}

\subsection{Covariance matrix symmetrization}\label{symmetrization}

In this subsection, we detail the symmetrization steps for covariance matrix. Firstly, suppose the initial three-mode quantum state is $\rho_{ACB}^0$, with covariance matrix $\gamma_{ACB}^{(0)}\ (=\gamma_{ACB})$, which can be written as
\begin{equation} 
    \gamma_{ACB}^{(0)}=\gamma_{ACB}=
    \begin{pmatrix} 
    V_{A}I_{2} & \phi_{AC}\,\sigma_{Z} & \kappa_{AB}^{(0)} \\ 
    \phi_{AC}\,\sigma_{Z} & V_CI_{2} & \psi _{CB}^{(0)}\\
    (\kappa_{AB}^{(0)})^T & (\psi _{CB}^{(0)})^T &\gamma_B^{(0)}
    \end{pmatrix},
\end{equation} 
where the upper-left $4\times 4$ sub-matrix $\gamma_{AC}$ is symmetric due to the intrinsic symmetry of the designed source, where 
\begin{equation}
    \begin{aligned}
    V_C=&(1-\eta_{BS})(V_A-1)+1, \\
    \phi _{AC}=&\sqrt{(1-\eta_{BS})\eta_A(V_A^2-1)},
    \end{aligned}
\end{equation}
and $I_2= (1\ 0\ ;\ 0\ 1)$, $\sigma_Z=(1\ 0\ ;\ 0\ -1)$. The remaining sub-matrices are estimated using the measurement data, resulting in an unknown matrix $\kappa_{AB}$, and two non-symmetric matrices $\psi_{CB}$ and $\gamma_{B}$,
\begin{equation}
    \begin{aligned}
    \kappa_{AB}^{(0)}&=(\kappa_{11}\ \kappa_{12}\ ;\ \kappa_{21}\ \kappa_{22}),\\
    \psi _{CB}^{(0)}&=(\phi_{11}\ \phi_{12}\ ;\ \phi_{21}\ \phi_{22}),\\
    \gamma_B^{(0)}&=(b_{11}\ b_{12}\ ;\ b_{12}\ b_{22}).
    \end{aligned}
\end{equation}
Then \(\gamma_{ACB}^{(0)}\) is symmetrized by the three steps as below:


\textit{$\mathbf{S}_1$: Controlled rotation.}
The first step symmetrizes the block $\gamma_B^{(0)}$. We introduce an auxiliary system $G$ as a control quantum state $|\varphi \rangle_G = \sqrt{1/2}(|0 \rangle_G+|1 \rangle_G)$, whose controlled operation $U_G=|0\rangle\langle0|_G\otimes I_{ABC} 
+|1\rangle\langle1|_G\otimes R_A\!\left(\tfrac{\pi}{2}\right)\otimes R_{BC}\!\left(-\tfrac{\pi}{2}\right)$ is applied to the target state $\rho_{ACB}^{(0)}$.
Then the global state becomes 
\begin{equation}
    \begin{aligned}
        \rho_{ACBG}^{(1)}=U_G\big(\rho_{ACB}^{(0)}\otimes|\varphi\rangle_G \langle \varphi|\big)U_G^{\dagger}.
    \end{aligned}
\end{equation} 
If \(|\varphi \rangle_G=|0\rangle_G\), the target state $\rho_{ACB}^{(0)}$ is left alone; if \(|\varphi \rangle_G=|1\rangle_G\), mode \(A\) is rotated by \(90^\circ\) while \(B\) and \(C\) are rotated by \(-90^\circ\), yielding the state $\rho_{ACB}^{\mathrm{(0R)}}$ with covariance matrix $\gamma_{ACB}^{\mathrm{(0R)}}$. After a quantum measurement on $G$, we can get the desired state $\rho_{ACB}^{(1)}$ and its covariance matrix
\begin{equation}
   \rho_{ACB}^{(1)}=\frac{\rho_{ACB}^{(0)}+\rho_{ACB}^{\mathrm{(0R)}}}{2},\ \gamma_{ACB}^{(1)}=\frac{\gamma_{ACB}^{(0)}+\gamma_{ACB}^{\mathrm{(0R)}}}{2},
\end{equation} 
where the sub-matrix $\gamma_{B}^{(1)}$ takes a symmetric form as 
\begin{equation}
   \gamma_{B}^{(1)}=V_BI_2=
   \begin{pmatrix}
    \frac{b_{11}+b_{22}}{2} & 0 \\ 
    0 & \frac{b_{11}+b_{22}}{2}
   \end{pmatrix}.
\end{equation} 
The remaining sub-matrices are
\begin{equation}
\kappa_{AB}^{(1)}=\begin{pmatrix}
\kappa^{(1)} & \Delta\kappa^{(1)} \\
\Delta\kappa^{(1)} & -\kappa^{(1)}
\end{pmatrix},
\ 
\psi _{CB}^{(1)}=\begin{pmatrix}
\phi^{(1)} & \Delta\phi^{(1)} \\
-\Delta\phi^{(1)} & \phi^{(1)}
\end{pmatrix},
\end{equation}
where $\kappa^{(1)}=(\kappa_{11}-\kappa_{22})/2$, $\Delta\kappa^{(1)}=(\kappa_{12}+\kappa_{21})/2$, $\phi^{(1)}=(\phi_{11}+\phi_{22})/2$, $\Delta\phi^{(1)}=(\phi_{12}-\phi_{21})/2$. It's obvious that $K(\rho_{ACB}^{(0)})=K(\rho_{ACB}^{\mathrm{(0R)}})$ because the rotation is a unitary operation. So based on the convexity of SKR, we have $K\big(\rho_{ACB}^{(1)}\big) \le K\big(\rho^{(0)}_{ACB}\big)$.

\textit{$\mathbf{S}_2$: Local rotation.}
This step symmetrizes the block $\psi _{CB}^{(2)}$. A local unitary operation $R_B$ with a symplectic matrix $X_B=(\cos \theta\ -\sin \theta\ ;\ \sin \theta\ \cos \theta)$ is applied to mode $B$, yielding
\begin{equation}
    \begin{aligned}
          \rho_{ACB}^{(2)}=&(I_{AC} \otimes R_B)\rho_{ACB}^{(1)}(I_{AC} \otimes R_B)^{\dagger },\\ 
          \gamma_{ACB}^{(2)}=&(I_{4} \oplus X_B)\gamma_{ACB}^{(1)}(I_{4} \oplus X_B)^T.
    \end{aligned}
\end{equation}
We let $\tan \theta= (\Delta\phi^{(1)})/(\phi^{(1)})$, then
\begin{equation}
    \begin{aligned}
        &\psi _{CB}^{(2)}=X_B\psi _{CB}^{(1)}X_B^T=\phi_{CB}I_2 = \\
     &\begin{pmatrix}
    \sqrt{(\phi^{(1)})^2 + (\Delta\phi^{(1)})^2} & 0 \\ 
    0 & \sqrt{(\phi^{(1)})^2 + (\Delta\phi^{(1)})^2}
   \end{pmatrix}.
    \end{aligned}
\end{equation}
Similarly,
\begin{equation}
  \kappa _{AB}^{(2)}=X_B\kappa _{AB}^{(1)}X_B^T=\begin{pmatrix}
\kappa^{(2)} & \Delta\kappa^{(2)} \\
\Delta\kappa^{2} & -\kappa^{(2)}
\end{pmatrix},  
\end{equation} 
where $\kappa^{(2)}=\kappa^{(1)} \cos \theta + \Delta\kappa^{(1)} \sin \theta$, and $\Delta\kappa^{(2)}=-\kappa^{(1)} \sin \theta + \Delta\kappa^{(1)} \cos \theta$.
This unitary operation does not change SKR, therefore we have $K\big(\rho_{ACB}^{(2)}\big) = K\big(\rho_{ACB}^{(1)}\big)$.

\textit{$\mathbf{S}_3$: State mixture.}
This step is to symmetrize the last block $\kappa _{AB}^{(2)}$. 
Let \(\rho_{ACB}^{(G,2+)}\) denote the Gaussian state sharing the same covariance matrix \(\gamma_{ACB}^{(2+)}\) (\(=\gamma_{ACB}^{(2)}\)) with \(\rho_{ACB}^{(2)}\). Define another Gaussian state \(\rho_{ACB}^{(G,2-)}\) associated with the covariance matrix \(\gamma_{ACB}^{(2-)}\), obtained from \(\gamma_{ACB}^{(2+)}\) by replacing \(\Delta\kappa^{(2)}\) with \(-\Delta\kappa^{(2)}\) while leaving all other entries unchanged. The corresponding symmetrized state and covariance matrix are
\begin{equation}
\rho_{ACB}^{\mathrm{sym}}=
\frac{\rho_{ACB}^{(G,2+)}+\rho_{ACB}^{(G,2-)}}{2},
\quad
\gamma_{ACB}^{\mathrm{sym}}=
\frac{\gamma_{ACB}^{(2+)}+\gamma_{ACB}^{(2-)}}{2},
\end{equation}
where the sub-matrix $\kappa _{AB}^{\mathrm{sym}}$ is symmetric and
\begin{equation}
    \kappa _{AB}^{\mathrm{sym}}=\kappa \sigma_Z= 
    \begin{pmatrix}
        \kappa^{(2)} & 0 \\
        0 & -\kappa^{(2)}
    \end{pmatrix}.
\end{equation}
Following the convexity of SKR, it can be proved that $K\big(\rho_{ACB}^{(3)}\big) \le K\big(\rho_{ACB}^{(2)}\big)$. 
Since \(\gamma_{ACB}^{(2+)}\) and \(\gamma_{ACB}^{(2-)}\) have the same symplectic spectra, the corresponding Gaussian states satisfy $K\left(\rho_{ACB}^{(G,2+)}\right)=K\left(\rho_{ACB}^{(G,2-)}\right)$.
Following the convexity of SKR and the Gaussian optimality theorem, $K\big(\rho_{ACB}^{\mathrm{sym}}\big) \le K\big(\rho_{ACB}^{(G,2+)}\big) \le K\big(\rho_{ACB}^{(2)}\big)$.

To obtain the mixed state, we first equally mix two entangled sources. Consider a physical source state \(|\Psi\rangle_{ACB_0}^{(2+)}\), which evolves into \(\rho_{ACB}^{(2)}\) after transmission through the quantum channel controlled by Eve. By Gaussian optimality, \(\rho_{ACB}^{(2)}\) can be replaced in the security analysis by the Gaussian state \(\rho_{ACB}^{(G,2+)}\) sharing the same covariance matrix \(\gamma_{ACB}^{(2+)}\). 
Since \(\gamma_{ACB}^{(2+)}\) and \(\gamma_{ACB}^{(2-)}\) have the same symplectic eigenvalues, Williamson's theorem \cite{GaussianQuantumInformation} guarantees that they are related by a symplectic transformation. Therefore, \(\rho_{ACB}^{(G,2-)}\) can be obtained from \(\rho_{ACB}^{(G,2+)}\) through the corresponding Gaussian unitary transformation and is also physical. Consequently, there exists a physical source state \(|\Psi\rangle_{ACB_0}^{(2-)}\), whose Gaussian counterpart after transmission is \(\rho_{ACB}^{(G,2-)}\).

Next, introduce another auxiliary system \(Q\), which can purify the entire system and yield 
$|\Psi\rangle_{ACB_0Q}^{(3)}=\frac{1}{\sqrt{2}}\bigl(|\Psi\rangle_{ACB_0}^{(2+)}|0\rangle_Q+|\Psi\rangle_{ACB_0}^{(2-)}|1\rangle_Q\bigr)$.
Then \(Q\) is measured and the result $q$ is published, with the result $q=|0\rangle_Q$ projecting the subsystem \(ACB_0\) onto \(|\Psi\rangle_{ACB_0}^{(2+)}\), and $q=|1\rangle_Q$ projecting it onto $|\Psi\rangle_{ACB_0}^{(2-)}$. After the measurement and Eve's attack, we can get the mixed state.

Ultimately, we obtain a scaling chain of SKR 
\begin{equation}
        K\big(\rho_{ACB}^{\mathrm{sym}}\big) \le K\big(\rho_{ACB}^{(2)}\big)
        =K\big(\rho_{ACB}^{(1)}\big) \le K\big(\rho^{(0)}_{ACB}\big),
\end{equation}
which demonstrates that we can safely calculate the SKR with a symmetric matrix, thus reducing the problem to solving for the single parameter $\kappa$. 

The above method is not limited to the case where the block $\gamma_{AC}$ is already in the symmetric form. More generally, an arbitrary two-mode Gaussian state can always be transformed into the standard form by local linear unitary Bogoliubov operations (LLUBOs) \cite{duan2000inseparability}. Since LLUBOs are local Gaussian unitary transformations, they leave the SKR invariant. Then the proposed three-step symmetrization procedure can be directly applied to general three-mode covariance matrices.

\subsection{Analytical solution for the range of $\kappa$}\label{kappa}
After we symmetrize the three-mode covariance matrix $\gamma_{ACB}$ and get $\gamma_{ACB}^{\mathrm{sym}}$, we reduce the problem to solving the single parameter $\kappa$. Applying the uncertainty-principle constraint $\gamma_{ACB}^{\mathrm{sym}}+i\Omega \ge 0$, the tightest feasible interval for $\kappa$ is obtained from the sixth-order principal minor, yielding $\kappa \in \mathcal{F}_{\kappa}=[\bar{\kappa}-R, \bar{\kappa}+R]$, where $\bar{\kappa} = \sqrt{T_{cb}\,\eta_{A}\,(1-\eta_{BS})\,\big(V_{A}^{2}-1\big)}$, and $R = \sqrt{T_{cb}\,\varepsilon_{cb}\,(1-\eta_{A})\,(V_{A}+1)}$.
The parameters \(T_{cb}\) and \(\varepsilon_{cb}\) are defined from the symmetrized two-mode covariance matrix \(\gamma_{CB}^{\mathrm{sym}}\) of modes \(B\) and \(C\) to provide an equivalent description of the channel transmittance and excess noise, where
\begin{equation}\label{eq:gamma_CB}
\gamma_{CB}^{\mathrm{sym}}=\begin{pmatrix} V_C I_2 & \phi_{CB} I_2 \\ \phi_{CB} I_2 & V_B I_2 \end{pmatrix}.
\end{equation}

For a matrix having the following form (assuming \(a>b\))
\begin{equation}\label{eq:gamma}
\gamma=\begin{pmatrix} aI_2 & cI_2 \\ cI_2 & bI_2 \end{pmatrix},
\end{equation}
the uncertainty principle requires $\det\bigl(\gamma+i\Omega\bigr)\ge 0$, which can be solved as
\begin{equation}\label{eq:c_upper}
c^2\le (a-1)(b-1)\le (a-1)^2.
\end{equation}
So we can define the channel transmittance and excess noise as
\begin{equation}\label{eq:T_def}
T_{\mathrm{ch}}=\frac{c^2}{(a-1)^2},\quad T\le 1, \quad \varepsilon_{\mathrm{ch}}=\frac{b-1}{T}-(a-1).
\end{equation}

Similarly, for the two-mode covariance matrix \(\gamma_{CB}^{\mathrm{sym}}\), we can define
\begin{equation}\label{eq:Tcb_epscb}
T_{cb}=\frac{\phi_{CB}^2}{(V_C-1)^2},\qquad
\varepsilon_{cb}=\frac{V_B-1}{T_{cb}}-(V_C-1).
\end{equation}
With these two parameters, we can obtain the analytical expression for the range of $\kappa$ as above. 

If we use the channel transmittance $T$ and excess noise $\varepsilon$ defined in Fig. \ref{EB}, as a linear channel model for performance simulation, then $\bar{\kappa}$ and $R$ can be re-expressed as $\bar{\kappa} = \sqrt{T\eta_{A}\eta_{BS}\,\big(V_{A}^{2}-1\big)}$, and $R = \sqrt{T\varepsilon(1-\eta_{A})(V_{A}+1)}$. And the correspondence between the parameters under these two definitions is
\begin{equation}\label{eq:Tcb_epscb}
T_{cb}=\frac{\eta_{BS}}{1-\eta_{BS}}T,\qquad
\varepsilon_{cb}=\frac{1-\eta_{BS}}{\eta_{BS}}\varepsilon.
\end{equation}

As for finding an analytical solution for $\kappa^{\mathrm{worst}}$ that maximizes $S_{BE}^G$, the problem is quite complex. Here, we present an avenue for exploration. 
Treating \(\kappa\) as a perturbative parameter provides a possible route to characterize the local dependence of \(S_{BE}^G\) on \(\kappa\). The symplectic eigenvalues can be expanded perturbatively around a reference point and substituted into \(S_{BE}^G\) through the chain rule. The resulting expansion can then be used to analyze the stationary condition \(dS_{BE}^G/d\kappa=0\), yielding an approximate analytical equation for the stationary points of \(S_{BE}^G\). However, since the perturbative coefficients depend on the full symplectic spectrum of the covariance matrix, obtaining a general closed-form expression for the maximizing value of \(\kappa\) remains highly challenging. Developing such an analytical solution is therefore left for future investigation.
\begin{figure}[t]
    \includegraphics[width= 8.5 cm]{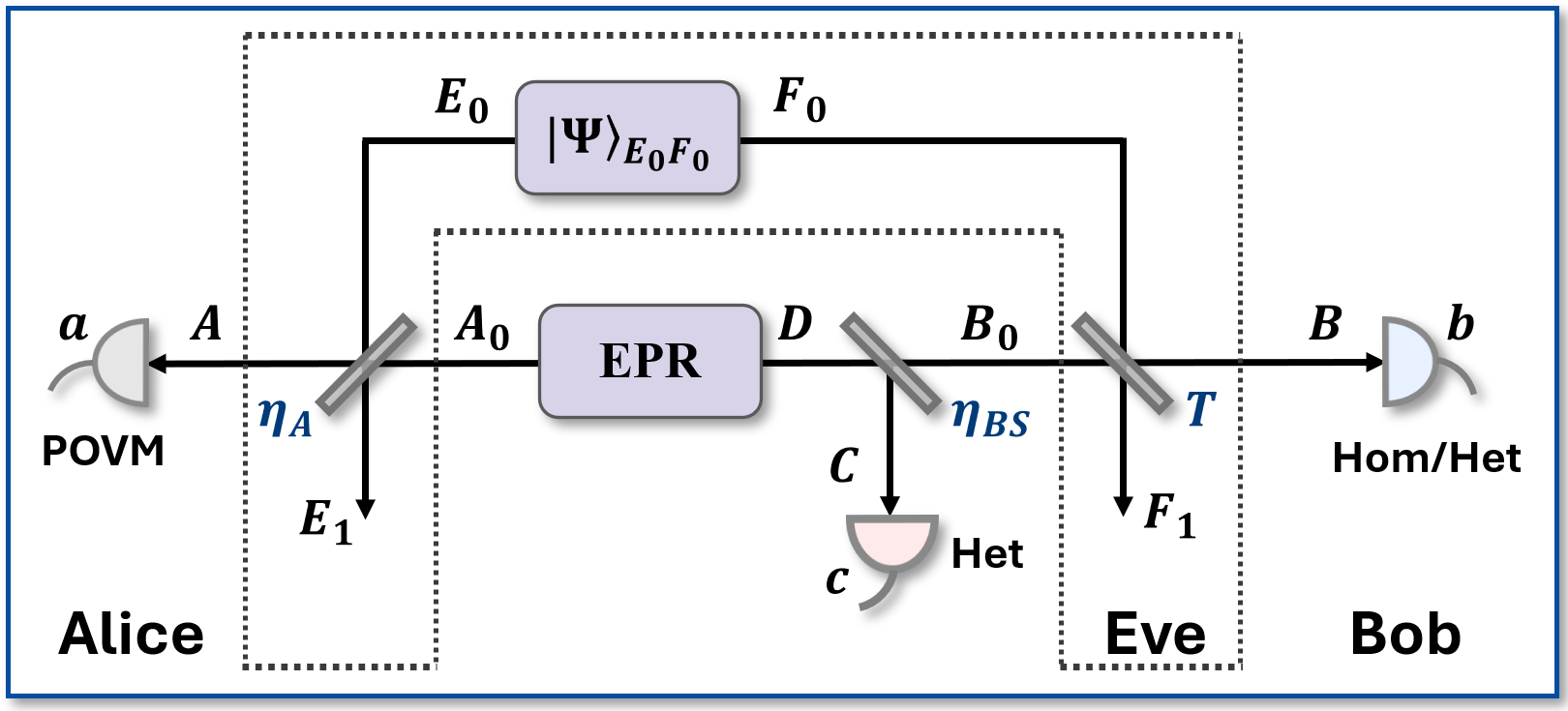}
    \caption{The attack model for the discrete-modulated protocol, in which the non-ideality of the discrete-modulated source is modeled as Eve's attack. 
    Alice prepares an Einstein-Podolski-Rosen (EPR) source and mode $D$ is divided by a beam splitter with transmittance $\eta_{BS}$ and get modes $C$ and $B_0$. Eve's attack is modeled as a two-mode state $|\Psi\rangle_{E_0F_0}$, where mode $E_0$ is coupled into the system via a beam splitter with transmittance $\eta_A$, mode $F_0$ is coupled into the system via a beam splitter with transmittance $T$. 
    }\label{Attackmodel}
\end{figure}

\subsection{Attack model}\label{model}

In this subsection, we provide a model to explain the unusual phenomenon that SKR exhibits nonlinear variations within the interval $\mathcal{F}_{\kappa}$. The model is presented in Fig. \ref{Attackmodel}, where non-ideality of the discrete-modulated source is modeled as an attack by Eve. Specifically, Alice prepares an Einstein-Podolski-Rosen (EPR) state, where mode $D$ is divided by a beam splitter with transmittance $\eta_{BS}$ and get modes $C$ and $B_0$. Eve's attack is modeled as a two-mode state $|\Psi\rangle_{E_0F_0}$, where mode $E_0$ is coupled into the system via a beam splitter with transmittance $\eta_A$, mode $F_0$ is coupled into the channel via a beam splitter with transmittance $T$. After Eve's, attack, mode $A$ and mode $B$ are obtained. Then, Alice uses POVM for mode $A$ and heterodyne measurement for mode $C$. Mode $B$ is homodyne or heterodyne measured by Bob.

According to this model, we can write the relationships between the different modes as follows:
\begin{equation}
    \begin{aligned}
        A =& \sqrt{\eta_A}A_0+\sqrt{1-\eta_A}E_0, \\
        B_0 =&\sqrt{\eta_{BS}}D_0+\sqrt{1-\eta_{BS}}N, \\
        B=& \sqrt{T}B_0+\sqrt{1-T}F_0,
    \end{aligned}
\end{equation}
where $N$ represents the vacuum state with variance $V_N=1$. The variance of mode $E_0$ is $V_{E_0}=V_{A}$. Mode $F_0$ represents the excess noise injected into the channel with variance $V_{F_0}=1+T\varepsilon /(1-T)$.

Thus, the covariance between mode $A$ and mode $B$ can be numerically calculated as 
\begin{equation}
    \begin{aligned}
        \phi_{AB} =& \sqrt{\eta_AT}\langle A_0B_0\rangle+\sqrt{1-\eta_A}\sqrt{1-T}\langle E_0F_0\rangle, \\
        =&\sqrt{\eta_AT\eta_{BS}}\sqrt{V_A^2-1}+\sqrt{1-\eta_A}\sqrt{1-T}\langle E_0F_0\rangle.
    \end{aligned}
\end{equation}
According to the uncertainty principle, we have
\begin{equation}
    \begin{aligned}
        |\langle E_0F_0\rangle| \le & \sqrt{(V_A+1)(V_{F_0}-1)} \\
        = & \sqrt{(V_A+1)(1+\frac{T\varepsilon}{1-T})},
    \end{aligned}
\end{equation}
When the covariance $\langle E_0F_0\rangle $ takes its minimum negative value, the two-mode state $|\Psi\rangle_{E_0F_0}$ is an ideal entangled state. At this point, the covariance between modes $A$ and $B$ is
\begin{equation}
        \phi_{AB}^{\mathrm{min}} =\sqrt{\eta_AT\eta_{BS}}\sqrt{V_A^2-1}-\sqrt{(1-\eta_A)T\varepsilon (V_A+1)}.
\end{equation}

Interestingly, the value of $\phi_{AB}^{\mathrm{min}}$ is exactly equal to the left boundary of the interval, i.e., $\phi_{AB}^{\mathrm{min}}=\bar{\kappa}-R$. 
This provides some insight on the non-monotonic behavior of the secret key rate, which is not necessarily minimized at the left boundary under different conditions.
As $\phi_{AB}$ changes from the mean value to the left boundary, Eve's two mode state $\phi_{E_0F_0}$ evolves from a mixed tensor product state to an entangled state, with its degree of entanglement reaching a maximum at the left boundary. While not rigorously proven, numerical observations suggest that, when SKR exhibits a non-monotonic dependence on $\kappa$, the maximum value of $S_{BE}^G$ happens near the point where $\phi_{E_0F_0}$ becomes entangled.
Since multipartite entanglement exhibits complex and potentially non-monotonic behavior, Eve's accessible information may not be a monotonic function of $\kappa$. The underlying mechanism remains unclear and deserves further investigation.

\begin{figure}
    \centering
    \includegraphics[width= 8.5 cm]{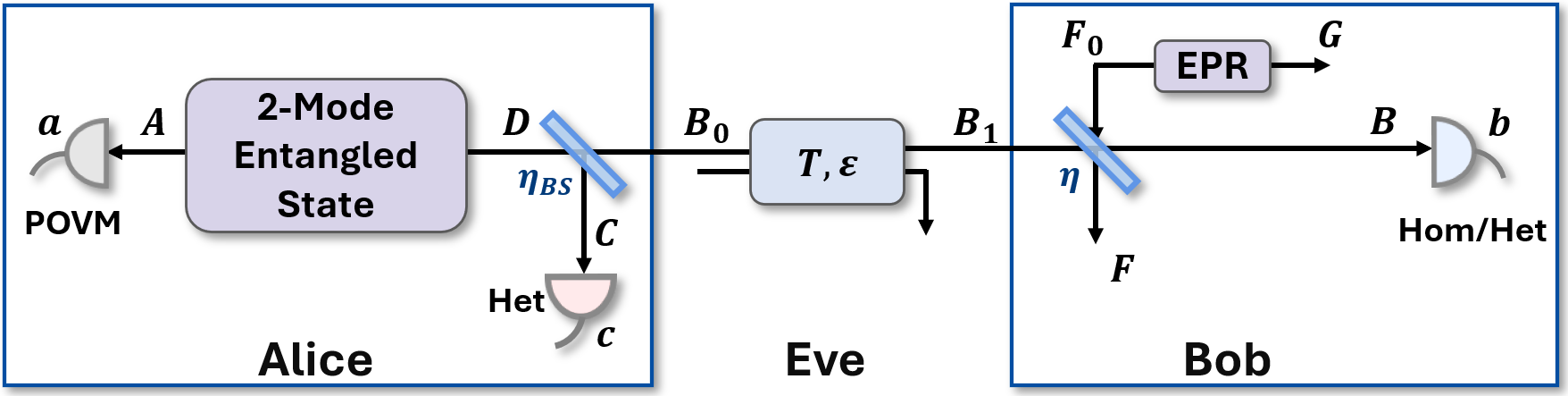}
    \caption{The Entanglement-based (EB) scheme of the proposed DM CV-QKD protocol with detector modeling. POVM, positive-operator valued measurement; Hom, homodyne detection; Het, heterodyne measure. The practical detector is characterized by the detection efficiency $\eta$ and electronic noise $v_el$.
    }\label{Detector}
\end{figure}

\begin{figure*}[t]
    \includegraphics[width=17 cm]{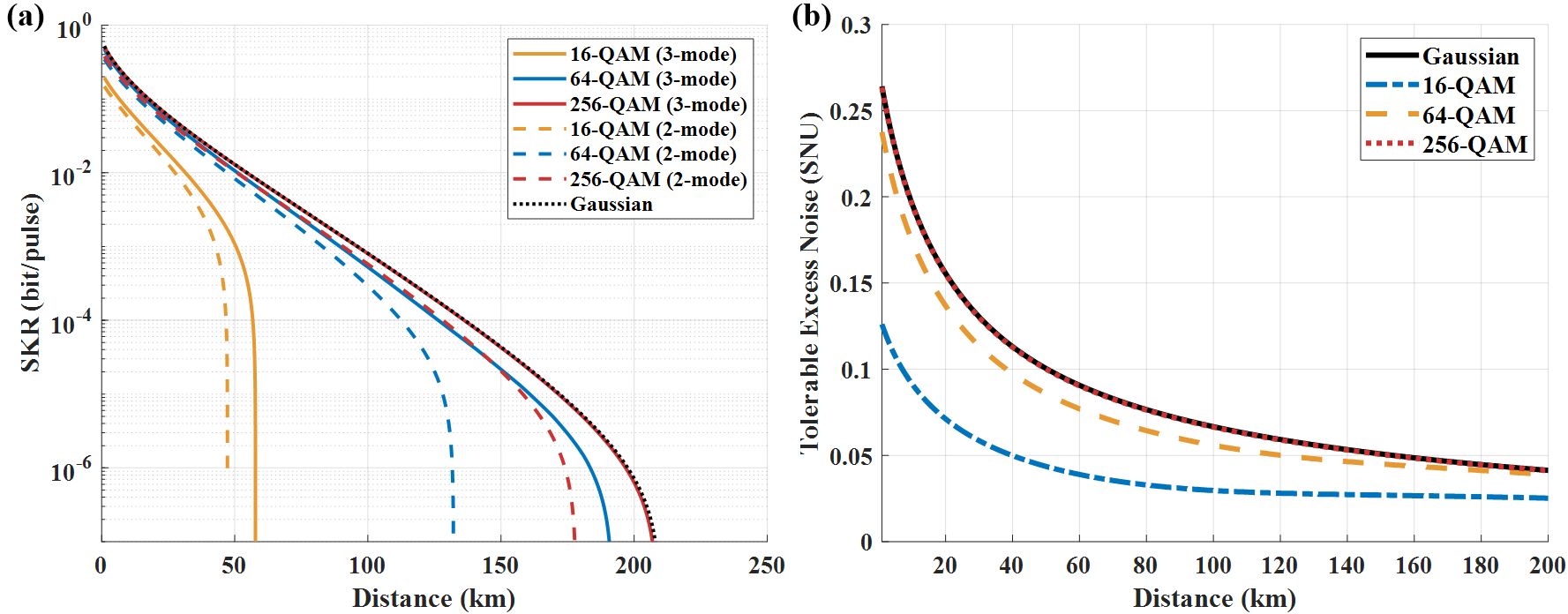}
    \caption{\label{Simulation1}
    (a) SKR of the DM protocol based on 3-mode matrix $\gamma_{ACB}^{\mathrm{sym}}$ and 2-mode matrix $\gamma_{AB}^{\mathrm{sym}}$ compared with the ideal GM protocol. The yellow, blue and red lines represent the 16-, 64-, and 256-QAM schemes, respectively. The black dotted line represents the ideal Gaussian case. (b) The tolerable excess noise of the DM protocol based on 3-mode matrix $\gamma_{ACB}^{\mathbf{sym}}$ with $\eta_{BS}=0.9$ compared to ideal Gaussian-modulated scheme. 
    Unified simulation parameters: reconciliation efficiency $\beta=0.95$, excess noise \(\varepsilon=0.04\), detection efficiency $\eta=0.6$, electronic noise $\nu_{ele}=0.1$ SNU. Mode variance $V_A=3$ SNU for 16-QAM and $V_A=5$ SNU for 64- and 256-QAM.
    }
\end{figure*}

\subsection{Compatibility with trusted detector modeling}\label{Compatibility}
To demonstrate the applicability of the proposed symmetrization framework in practical CV-QKD systems, we show that the symmetrization procedure developed for the original three-mode covariance matrix can be naturally extended to the detector-modeling scenario. In particular, the symmetrization of the three-mode state can be consistently generalized to the corresponding five-mode covariance matrix arising from homodyne or heterodyne detection.

The protocol with detector modeling is illustrated in Fig. \ref{Detector}. For heterodyne detection, we focus on the intermediate-frequency system, in which the \(x\) and \(p\) quadratures can be simultaneously obtained by down-converting the outputs of one balanced homodyne detector. In this case, the heterodyne detector can be considered to consist of two symmetrical homodyne detectors with the same detection efficiency $\eta$ and electronic noise $v_{el}$. Therefore, with the detector modeling, the original three-mode system \(ACB_1\) is extended to the five-mode system \(ACBFG\), and the covariance matrix can be written as 

\begin{equation}
    \begin{aligned}
      &\gamma_{ACBFG}^{(0)} =  \\
      & \begin{pmatrix}
V_A I_2 & \phi_{AC}\sigma_Z & \kappa_{AB} & \kappa_{AF} & 0 \\
\phi_{AC}\sigma_Z & V_C I_2 & \psi_{CB} & \psi_{CF} & 0 \\
\kappa_{AB}^T & \psi_{CB}^T & \gamma_B & \psi_{BF} & \phi_{BG}\sigma_Z \\
\kappa_{AF}^T
&
\psi_{CF}^T
&
\psi_{BF}^T
&
\gamma_F
&
\phi_{FG}\sigma_Z
\\
0
&
0
&
\phi_{BG}\sigma_Z
&
\phi_{FG}\sigma_Z
&
V_G I_2
\end{pmatrix}
    \end{aligned},
\label{eq:gamma_ACBFG}
\end{equation}
where the asymmetric sub-matrices are 
\begin{equation}
    \begin{aligned}
        &\kappa_{AB}=\sqrt{\eta}\kappa_{AB_1}=\sqrt{\eta}
    \begin{pmatrix}
        \kappa_{11} & \kappa_{12} \\
        \kappa_{21} & \kappa_{22}
    \end{pmatrix} \\
    &\kappa_{AF}=\sqrt{1-\eta}
    \begin{pmatrix}
        \kappa_{11} & \kappa_{12} \\
        \kappa_{21} & \kappa_{22}
    \end{pmatrix}\\
        &\psi_{CB}=\sqrt{\eta}\psi_{CB_1}=\sqrt{\eta}
    \begin{pmatrix}
        \phi_{11} & \phi_{12} \\
        \phi_{21} & \phi_{22}
    \end{pmatrix} \\
    &\psi_{CF}=\sqrt{1-\eta}
    \begin{pmatrix}
        \phi_{11} & \phi_{12} \\
        \phi_{21} & \phi_{22}
    \end{pmatrix}\\
        &\gamma_{B}=\eta
    \begin{pmatrix}
        b_{11} & b_{12} \\
        b_{21} & b_{22}
    \end{pmatrix}+(1-\eta)V_G I_2 \\
    &\gamma_{F}=(1-\eta)
    \begin{pmatrix}
        b_{11} & b_{12} \\
        b_{21} & b_{22}
    \end{pmatrix}+\eta V_G I_2 \\
    &\psi_{CF}=\sqrt{\eta(1-\eta)}\left(\begin{pmatrix}
        b_{11} & b_{12} \\
        b_{21} & b_{22}
    \end{pmatrix}-V_GI_2\right)
    \end{aligned},
\end{equation}
where $V_G=1+v_{el}/(1-\eta)$ for homodyne detection, and $V_G=1+2v_{el}/(1-\eta)$ for heterodyne detection.

Now we show that the five-mode covariance matrix can be symmetrized by the modified three-step symmetrization procedure. Since all blocks involving mode \(G\) are already symmetric, they remain unchanged throughout the symmetrization process and will not be discussed further.

In the first step, the controlled \(90^\circ\) phase-space rotation is modified as follows: conditioned on the value of the control state, mode \(A\) is randomly rotated by \(90^\circ\) clockwise, while modes \(C\), \(B\), \(F\), and \(G\) are simultaneously rotated by \(90^\circ\) counterclockwise. After the rotation, the covariance matrices \(\gamma_B\) and \(\gamma_F\) are simultaneously symmetrized. Meanwhile, the correlation blocks \(\phi_{CB}\) and \(\phi_{CF}\) have identical diagonal entries and opposite off-diagonal entries.

In the second step, a local rotation is applied to mode \(B\). Since the covariance matrices \(\phi_{CB}\) and \(\phi_{CF}\) differ only by an overall scaling factor, they share the same rotation angle. Consequently, the same local rotation simultaneously diagonalizes both correlation blocks.
The resulting covariance matrix can therefore be written as
\begin{equation}
    \begin{aligned}
      &\gamma_{ACBFG}^{(2)} =  \\
      & \begin{pmatrix}
V_A I_2 & \phi_{AC}\sigma_Z & \sqrt{\eta_d}K & \sqrt{1-\eta_d}K & 0 \\
\phi_{AC}\sigma_Z & V_C I_2 & \phi_{CB}I_2 & \phi_{CF}I_2 & 0 \\
\sqrt{\eta_d}K & \phi_{CB}I_2 & V_B I_2 & \phi_{BF}I_2 & \phi_{BG}\sigma_Z \\
\sqrt{1-\eta_d}K
&
\phi_{CF}I_2
&
\phi_{BF}I_2
&
V_F I_2
&
\phi_{FG}\sigma_Z
\\
0
&
0
&
\phi_{BG}\sigma_Z
&
\phi_{FG}\sigma_Z
&
v I_2
\end{pmatrix}
    \end{aligned},
\label{eq:gamma_ACBFG_sym}
\end{equation}
where $K=\begin{pmatrix}
    \kappa & \Delta \kappa \\
    \Delta \kappa & -\kappa
\end{pmatrix}$.

In the third step, the two states with covariance matrices corresponding to opposite values of \(\Delta\kappa\) are mixed with equal probabilities. Direct calculation shows that these two covariance matrices possess identical symplectic eigenvalues and therefore yield the same SKR. Consequently, the mixture satisfies the concavity requirement of the SKR, and the SKR scaling argument established for the three-mode case remains valid.

After these three steps, the five-mode covariance matrix is transformed into the symmetric form as
\begin{equation}
    \begin{aligned}
      &\gamma_{ACBFG}^{\mathrm{sym}} =  \\
      & \begin{pmatrix}
V_A I_2 & \phi_{AC}\sigma_Z & \sqrt{\eta_d}\kappa\sigma_Z & \sqrt{1-\eta_d}\kappa\sigma_Z & 0 \\
\phi_{AC}\sigma_Z & V_C I_2 & \phi_{CB}I_2 & \phi_{CF}I_2 & 0 \\
\sqrt{\eta_d}\kappa\sigma_Z & \phi_{CB}I_2 & V_B I_2 & \phi_{BF}I_2 & \phi_{BG}\sigma_Z \\
\sqrt{1-\eta_d}\kappa\sigma_Z
&
\phi_{CF}I_2
&
\phi_{BF}I_2
&
V_F I_2
&
\phi_{FG}\sigma_Z
\\
0
&
0
&
\phi_{BG}\sigma_Z
&
\phi_{FG}\sigma_Z
&
v I_2
\end{pmatrix}
    \end{aligned},
\label{eq:gamma_ACBFG_sym}
\end{equation}
Furthermore, the resulting symmetrized covariance matrix is identical to that obtained by first applying the symmetrization procedure to the original three-mode covariance matrix and then extending it to the five-mode matrix with detector modeling. Therefore, the detector model neither alters the symmetrization framework nor affects the SKR scaling relations derived in the previous section. The detector parameters, including the detection efficiency and the electronic noise, remain unchanged throughout the symmetrization procedure.






\begin{figure*}[t]
    \includegraphics[width=17 cm]{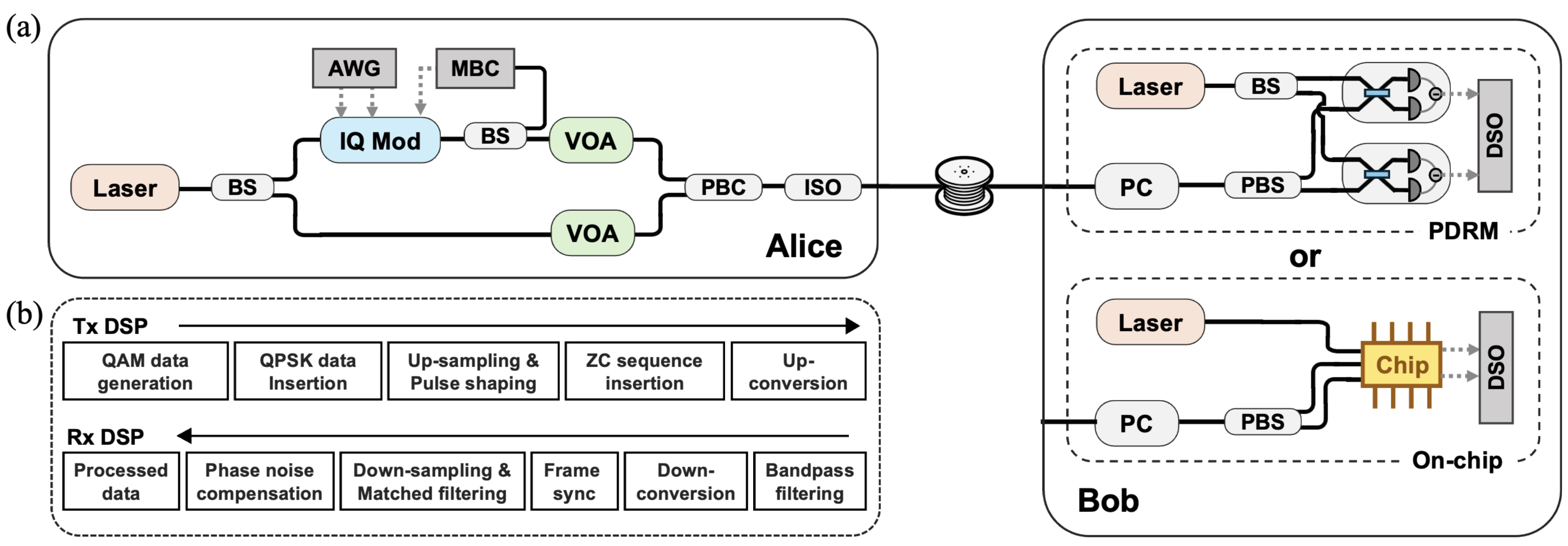}
    \caption{Experimental demonstration of the proposed discrete modulated CV-QKD protocol. (a) System layout. (b) DSP routine. BS, beam splitter; IQ Mod, in-phase/quadrature modulator; VOA, variable optical attenuator; PBC, polarization beam combiner; ISO, optical isolator; PC, polarization controller; PBS, polarization beam splitter; PDRM, polarization diversity receiver module; AWG, arbitrary waveform generator;  DSO, digital storage oscilloscope.\label{Fig: Exp_Scheme}
    }
\end{figure*}

\subsection{Simulations}\label{Simulation}
We present some additional simulation results for most commonly used 16-, 64-, and 256-QAM modulation schemes in this subsection. 
We adopt the parameter pairs $(V_G, V_A, \eta_A^{\mathrm{opt}})$ listed in Tab. \ref{tab1}. Considering imperfect error correction, the reconciliation efficiency is set to \(\beta=0.95\). Additionally, The detector efficiency and electronic noise are set to \(\eta_{D}=0.6\) and \(v_{\mathrm{el}}=0.1\).

The SKRs for three constellation cases are shown in Fig. ~\ref{Simulation1} (a) and compared with the ideal GM benchmark. The 256-QAM scheme nearly reaches the ideal Gaussian performance, while the 64-QAM scheme achieves comparable results, whereas the 16-QAM scheme exhibits noticeable degradation. These results demonstrate the effectiveness of our protocol with a high-order constellation. 
For the simplified 2-mode scheme, the SKR experiences a slight decrease, which can be largely compensated by increasing $\eta_{BS}$, as discussed in the main text.  

Figure~\ref{Simulation1} (b) shows the maximum tolerable excess noise for each modulation scheme. These results demonstrate the protocol's strong noise resilience: by selecting a suitable source, the protocol can tolerate at least \(0.05\) SNU of excess noise within 100~km, supporting practical commercial implementation and large-scale metropolitan deployment.

\section{Experimental demonstration}
The proposed protocol has been experimentally demonstrated on both the discrete-module-based and chip-based platforms. The experimental setup of the system is presented in Fig. \ref{Fig: Exp_Scheme} (a) and the DSP routine is illustrated in Fig. \ref{Fig: Exp_Scheme} (b).

\subsection{Transmitter}
Alice employs a continuous-wave (CW) laser (NKT Photonics Basik X15) operating at 1550.12 nm with a 0.1 kHz linewidth as the optical carrier. The laser output is split by a beam splitter into two branches: one undergoes QAM modulation, while the other serves as a pilot tone to provide a frequency and phase reference.

The QAM modulation is implemented using an in-phase/quadrature (IQ) modulator (Fujitsu FTM7962EP) driven by an arbitrary waveform generator (AWG) at a sampling rate of 30 GSa/s. With a bandwidth of 23 GHz, the IQ modulator supports a system baud rate of 1 GBaud. To ensure stability, a portion of the modulated signal is directed to a bias controller that maintains the modulator in carrier-suppression mode. Subsequently, a variable optical attenuator (VOA, EXFO FTBx-3500-BI) reduces the signal to the quantum level, achieving an average photon number consistent with the target modulation variance. This VOA integrates a real-time power monitor for precise calibration of the output power.

Finally, the modulated quantum signals and pilot tones are merged by a polarization beam combiner (PBC). They are aligned in orthogonal polarization directions and co-propagated through the untrusted quantum channel.

\begin{figure*}[t]
    \includegraphics[width=17 cm]{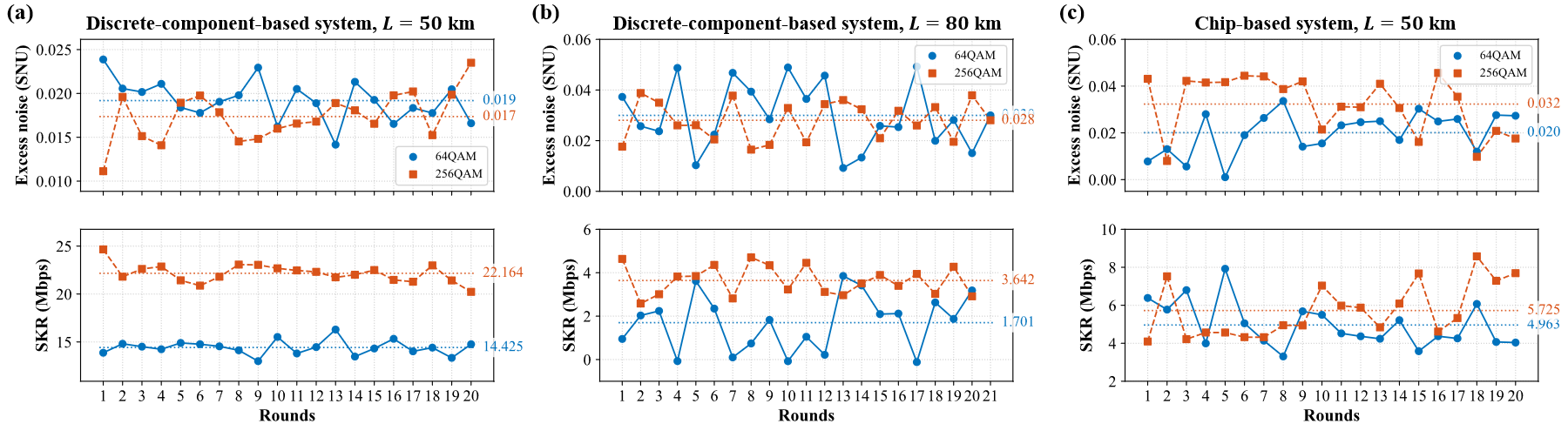}
    \caption{\label{Exp_results}Experimental results for both the discrete-module-based system and chip-based system with 64-QAM and 256-QAM modulation. \textbf{(a)} The estimated excess noise and secret key rates for discrete-component-based system at 50 km. Dashed lines denote the average values, which are labeled beside the corresponding curves. \textbf{(b)} The estimated excess noise and secret key rates for discrete-component-based system at 80 km. \textbf{(c)} The estimated excess noise and secret key rates for chip-based system at 50 km. The block size of each round is $2.56 \times 10^8$. Dashed lines denote the average values, which are labeled beside the corresponding curves.
    }
\end{figure*}

\subsection{Receiver}
The receiver, Bob, is designed to support two distinct configurations for shot-noise-limited coherent detection: a discrete Polarization-Diversity Receiver Module (PDRM) and an integrated on-chip solution.

In the PDRM configuration, the incoming signal from the untrusted quantum channel passes through a polarization controller (PC) followed by a polarization beam splitter (PBS) to demultiplex the orthogonal polarization components (the quantum signal and the pilot tone). A local oscillator (LO), provided by a dedicated laser (NKT Photonics Basik X15), is split by a 1:2 beam splitter (BS). The LO center frequency is offset by approximately 1.55 GHz relative to Alice’s transmitter laser to enable intermediate-frequency (IF) coherent detection. The resulting signals are then captured and digitized by a digital storage oscilloscope (DSO) for further analysis.

Alternatively, Bob can be implemented using a silicon photonic integrated chip. In this architecture, the signal (after the PC and PBS) and the LO laser are coupled directly into a silicon photonic chip, which performs highly balanced interference and shot-noise-limited detection within a compact footprint. The output electrical signals are similarly processed by the DSO. Compared to discrete fiber-based components, this integrated approach significantly minimizes the system footprint while enhancing scalability and mechanical stability.


\subsection{Digital signal processing}
The system performance is ensured by a comprehensive DSP routine, categorized into transmitter-side (Tx) and receiver-side (Rx) processing.

The Tx DSP prepares the driven signal for modulation through the following sequential stages. QAM data generation: random bit strings are mapped to a target PCS QAM constellation; QPSK data Insertion: training symbols, modulated using Quadrature Phase Shift Keying (QPSK) format, are multiplexed with the quantum data in time domain to facilitate phase estimation; Up-sampling and pulse shaping: The digital symbols are oversampled and filtered (using a root-raised-cosine filter) to limit the signal bandwidth and minimize inter-symbol interference (ISI); ZC sequence insertion: Zadoff-Chu (ZC) sequences are integrated into the frame structure to enable robust frame synchronization and time recovery at the receiver; Up-conversion: the baseband signal is digitally up-converted to an intermediate frequency (IF) to prepare it for physical modulation and transmission.

Upon detection, the digitized raw data undergoes a reverse processing chain to recover the information. Bandpass filtering: out-of-band noise is suppressed to improve the SNR of the captured waveform; Down-conversion: the IF signal is shifted back to baseband, extracting the complex envelopes of the modulated data; Frame synchronization: using the pre-inserted ZC sequences, the receiver identifies the start of each data frame with high temporal precision; Down-sampling and matched filtering: the signal is down sampled to the symbol rate, and matched filtering is applied to maximize the SNR and eliminate residual ISI; Phase noise compensation: leveraging the QPSK signals, the DSP compensates for the relative phase drift and frequency offset between the transmitter and LO lasers; Processed data: the final output consists of the recovered receiver variables, ready for post-processing steps such as parameter estimation, information reconciliation, and privacy amplification.

\subsection{Experimental results}

The experimental results for both the discrete-component-based system and chip-based system with 64-QAM and 256-QAM are shown in Fig. \ref{Exp_results}. The estimated excess noise and secret key rates under different links are presented respectively. Similarly, the mode variance $V_A=5$ SNU, which corresponds to a modulation variance of $V_M=3.6$ SNU. The variances of the Gaussian probabilistic constellation shaping are $V_G=6$ for 64-QAM and $V_G=11$ for 256-QAM. The training sequence ratio is 1/8. Furthermore, the reconciliation efficiency is set to $92\%$ for 64-QAM and $95\%$ for 256-QAM. The data block size used for parameter estimation in each test round was $2.56 \times 10^8$.
For each scenario, we conducted 20 rounds of testing. The results of all 20 rounds are plotted in Fig. \ref{Exp_results}, where the upper figures show the estimated excess noise, and the lower ones show the SKR. The average results of the 20-round tests are labeled on the right side of the graph. 
For the discrete-module-based system at 50 km (Fig. \ref{Exp_results} (a)), the average estimated excess noise is 0.019 SNU and 0.017 SNU for 64-QAM and 256-QAM, and the average SKRs are 14.425 Mbps and 22.164 Mbps for 64-QAM and 256-QAM respectively. 
For the discrete-module-based system at 80 km (Fig. \ref{Exp_results} (b)), the average estimated excess noise is 0.030 SNU and 0.028 SNU, while the SKRs are 1.701 Mbps and 3.642 Mbps respectively. For the chip-based system at 50 km (Fig. \ref{Exp_results} (c)), the average estimated excess noise is 0.032 SNU and 0.020 SNU, while the SKRs are 4.963 Mbps and 5.725 Mbps respectively.
The results show that the SKR under 256-QAM modulation is consistently higher than that of 64-QAM, and its excess noise exhibits smaller fluctuations. Increasing the modulation order generally improves the achievable SKR, a trend observed across all experiment configurations. Therefore, adopting higher-order modulation is an effective approach to improve the signal-to-noise ratio and approach the performance of an ideal Gaussian modulation scheme.

\end{document}